\documentclass{article}
\usepackage[left=1in,top=1in,right=1in,bottom=1in,head=1in]{geometry}
\usepackage{amsfonts,amsmath,amssymb,amsthm}
\usepackage{verbatim,float,url,enumerate}
\usepackage{graphicx,subfigure,epsfig,psfrag}
\usepackage{bm,dsfont,color,appendix}
\usepackage{natbib}
\usepackage{latexsym,graphicx}
\pdfminorversion=4

\newtheorem{theorem}{Theorem}
\newtheorem{lemma}{Lemma}
\newtheorem{remark}{Remark}
\newtheorem{corollary}{Corollary}

\newcommand{\argmax}{\mathop{\mathrm{argmax}}}

\def\P{\mathbb{P}}

\def\col{\mathrm{col}}

\def\sign{{\mathrm{sign}}}
\def\supp{\mathrm{supp}}

\def\R{\mathbb{R}}

\def\cP{\mathcal{P}}

\def\Exp{\mathrm{Exp}}

\def\RSS{\mathrm{RSS}}

\def\V{\mathcal{V}}
\def\weight{\omega}
\def\cp{\overset{P}{\rightarrow}}

\title{Exact Post-Selection Inference for Sequential Regression
  Procedures}
\author{Ryan J. Tibshirani$^1$  
\and Jonathan Taylor$^2$  
\and Richard Lockhart$^3$ 
\and Robert Tibshirani$^2$}
\date{\normalsize $^1$Carnegie Mellon University,
$^2$Stanford University,
$^3$Simon Fraser University}

\begin{document}
\maketitle

\begin{abstract}
We propose new inference tools for forward stepwise 
regression, least angle regression, and the lasso.  Assuming a
Gaussian model for the observation vector $y$, we first describe  
a general scheme to perform valid inference after any selection event 
that can be characterized as $y$ falling into
a polyhedral set.  This framework
allows us to derive conditional (post-selection) hypothesis tests
at any step of forward stepwise or least angle regression, or any step 
along the lasso regularization path, because, as it turns out,
selection events for these procedures can be expressed as polyhedral
constraints on $y$.  The p-values associated with these tests are
exactly uniform under the null distribution, in finite samples,
yielding exact type I error control. The tests can also be inverted to
produce confidence intervals for appropriate underlying regression
parameters.  The R package {\tt selectiveInference}, freely available
on the CRAN repository, implements the new inference tools
described in this paper. \\
Keywords: {\it forward stepwise regression, least angle regression,
  lasso, p-value, confidence interval, post-selection inference}
\end{abstract}

\section{Introduction}
\label{sec:intro}
We consider observations $y \in \R^n$ drawn from a Gaussian model   
\begin{equation}
\label{eq:mod}
y = \theta + \epsilon, \;\;\; \epsilon \sim N(0,\sigma^2 I),
\end{equation}
Given a fixed matrix $X \in \R^{n\times p}$ of predictor variables,
our focus is to provide inferential tools for methods   
that perform variable selection and estimation in an adaptive linear
regression of $y$ on $X$.  Unlike much of the related literature on
adaptive linear modeling, we 
{\it do not assume that the true model is itself linear}, i.e.,
we do not assume that $\theta=X\beta^*$ for a vector of true
coefficients $\beta^* \in \R^p$.  The particular
regression models that we consider in this paper are built from
sequential procedures that add (or delete) one variable at
a time, such as forward stepwise regression (FS), least angle
regression (LAR), and the lasso regularization path.  However, we
stress that the underpinnings of our approach extends well beyond 
these cases.  

To motivate the basic problem and illustrate our proposed 
solutions, we examine a data set of 67 observations and 8 
variables, where the outcome is the log PSA level of men who had
surgery for prostate cancer. The same data set was used to motivate
the covariance test in \citet{LTTT2013}.\footnote{The results for the
  naive FS test and the covariance test differ slightly from those
  that appear in \citet{LTTT2013}. We use a version of FS
  that selects variables to maximize the drop in residual sum of
  squares at each step; \citet{LTTT2013} use a version based on
  the maximal absolute correlation of a variable with the
  residual. Also, our naive FS p-values 
  are one-sided, to match the one-sided nature of
  the other p-values in the table, whereas \citet{LTTT2013} use
  two-sided naive FS p-values.  Lastly, we use an
  $\Exp(1)$ limit for the covariance test, and \citet{LTTT2013} use an 
  F-distribution to account for the unknown variance.}  
The first two numeric columns of Table \ref{tab:prostate} show the
p-values for regression coefficients of variables that enter the
model, across steps of FS. The first column shows the results of 
applying naive, ordinary t-tests to compute the significance of these
regression coefficients.  We see that the first four variables are
apparently significant at the 0.05 level, but this is suspect, as the
p-values do not account for the greedy selection of variables that is
inherent to FS.  The second column shows our new
selection-adjusted  p-values for FS, from a {\it truncated 
Gaussian (TG) test} developed in Sections \ref{sec:poly} and 
\ref{sec:seq_tests}.  These do properly account for the greediness:
they are conditional on the active set at each step, and now just
two variables are significant at the 0.05 level.   

\begin{table}[htb]
\centering
\begin{tabular}{|r|rr||r|rrr|}
  \hline
 & FS, naive & FS, TG & & LAR, cov & LAR, spacing & LAR, TG \\ 
\hline
lcavol & 0.000 & 0.000 & lcavol & 0.000 & 0.000 & 0.000 \\ 
  lweight & 0.000 & 0.027 & lweight & 0.047 & 0.052 & 0.052 \\ 
  svi & 0.019 & 0.184 & svi & 0.170 & 0.137 & 0.058 \\ 
  lbph & 0.021 & 0.172 & lbph & 0.930 & 0.918 & 0.918 \\ 
  pgg45 & 0.113 & 0.453 & pgg45 & 0.352 & 0.016 & 0.023 \\ 
  lcp & 0.041 & 0.703 & age & 0.653 & 0.586 & 0.365 \\ 
  age & 0.070 & 0.144 & lcp & 0.046 & 0.060 & 0.800 \\ 
  gleason & 0.442 & 0.800 & gleason & 0.979 & 0.858 & 0.933 \\ 
   \hline
\end{tabular}
\caption{\it Prostate cancer data example: p-values across steps of
  the forward stepwise (FS) path, computing using naive $t$-tests
  that do not account for greedy selection, and our new truncated
  Gaussian (TG) test for FS; also shown are p-values for the least
  angle regression (LAR) path, computed using the covariance test of 
  \cite{LTTT2013}, and our new spacing and TG tests for LAR.}      
\label{tab:prostate}
\end{table}

The last three numeric columns of Table \ref{tab:prostate} show
analogous results for the LAR algorithm applied to the prostate cancer
data (the LAR and lasso paths are identical here, as there were no
variable deletions).  The covariance test \citep{LTTT2013}, reviewed
in the Section \ref{sec:covtest}, measures the improvement in the LAR
fit due to adding a predictor at each step, and the third column shows   
p-values from its $\Exp(1)$ asymptotic null distribution.  Our 
new framework applied to LAR, described in Section
\ref{sec:seq_tests}, produces the results in the rightmost 
column.  We note that this TG test assumes far 
less than the covariance test.  In fact, our TG
p-values for both FS and LAR do not require assumptions about the
predictors $X$, or about the true model being linear.  They also use a
null distribution that is correct in finite samples, rather than
asymptotically, under Gaussian errors in \eqref{eq:mod}.   The fourth
column above shows a computationally efficient approximation to
the TG test for LAR, that we call the {\it spacing test}.  Later, we 
establish an asymptotic equivalence between our new spacing for LAR
and the covariance test, and this is supported by the similarity
between their p-values in the table.    

The R package {\tt selectiveInference} provides an implementation of
the TG tests for FS and LAR, and all other inference tools
described in this paper. This package is available on the CRAN
repository, as well as 
\url{https://github.com/selective-inference/R-software}.  A Python
implementation is also available, at
\url{https://github.com/selective-inference/Python-software}. 

A highly nontrivial and important question is to figure out how to
combine p-values, such as those in Table \ref{tab:prostate}, to build
a rigorous stopping rule, i.e., a model selection rule.  While we 
recognize its importance, this topic is {\it not the focus} of our paper.
Our focus is to provide a method for computing proper p-values like
those in Table \ref{tab:prostate} in the first place, which we view as
a major step in the direction of answering the model selection
problem in a practically and theoretically satisfactory manner.
Our future work is geared more toward model selection; we also
discuss this problem in more detail in Section \ref{sec:model_sel}.

\subsection{Related work}

There is much recent work on inference for high-dimensional 
regression models.  One class of techniques, e.g., by
\citet{screenclean,stabselect,minnier2011} is based on
sample-splitting or resampling methods. 
Another class of approaches, e.g., by \citet{zhangconf,
  buhlsignif, vdgsignif, montahypo2, montahypo1} is based on
``debiasing'' or ``denoising'' a regularized regression estimator,
like the lasso. The inferential targets considered in the
aforementioned works are all fixed, and not post-selected, like the
targets we study  here.  As we see it, it is clear (at least
conceptually) how to use sample-splitting techniques to accommodate  
post-selection inferential goals; it is much less clear how to do so
with the debiasing tools mentioned above. 

\citet{Berk2013} carry out valid post-selection inference (PoSI)
by considering all possible model selection  
procedures that could have produced the given submodel. As the authors
state, the inferences are generally conservative for particular
selection procedures, but have the advantage that they do not depend
on the correctness of the selected submodel. This same advantage is 
shared by the tests we propose here.  Comparisons of our 
tests, built for specific selection mechanisms, and the PoSI tests,
which are much more general, would be interesting to pursue in 
future work.

\citet{exactlasso}, reporting on work concurrent with that of this
paper, construct p-values and intervals for lasso coefficients at a
fixed value of the regularization parameter $\lambda$ (instead of a
fixed number of steps $k$ along the lasso path, as we consider in
Section \ref{sec:seq_tests}).  This paper and ours both leverage the
same core    
statistical framework, using truncated Gaussian (TG) distributions,
for exact post-selection inference, but differ in the
applications pursued with this framework.  After our work was
completed, there was further progress on the application and
development of exact post-selection inference tools, e.g., by 
\citet{exactscreen},  \citet{exactmeans}, \citet{exactstep},
\citet{exactpca}, \citet{optimalinf}. 

\subsection{Notation and outline}

Our notation in the coming sections is as follows.
For a matrix $M \in \R^{n\times p}$ and list $S=[s_1,\ldots s_r]
\subseteq [1,\ldots p]$, we write $M_S \in \R^{n\times |S|}$ for the 
submatrix formed by extracting the corresponding columns of $M$
(in the specified order).  Similarly for a vector $x \in \R^p$, we
write $x_S$ to denote the relevant subvector.  We write $(M^T M)^+$
for the (Moore-Penrose) pseudoinverse of the square matrix $M^T M$,
and $M^+=(M^T M)^+ M^T$ for the pseudoinverse of the rectangular
matrix $M$.  Lastly, we use $P_L$ for the projection operator onto a
linear space $L$. 

Here is an outline for the rest of this paper.
Section \ref{sec:summary} gives an overview of our main results.  
Section \ref{sec:poly} describes our general framework for exact
conditional inference, with truncated Gaussian (TG) test statistics.
Section \ref{sec:seq_tests} presents applications of this framework 
to three sequential regression procedures: FS, LAR, and lasso.
Section \ref{sec:spacing} derives a key approximation to our 
TG test for LAR, named the {\it spacing test}, which is
considerably simpler (both in terms of form and computational
requirements) than its exact counterpart.  Section \ref{sec:examples}
covers empirical 
examples, and Section \ref{sec:covtest} draws connections between the
spacing and covariance tests.  We finish with a discussion in
Section \ref{sec:discussion}.

\section{Summary of results}
\label{sec:summary}

We now summarize our conditional testing framework, that yield the 
p-values demonstrated in the prostate cancer data example, beginning 
briefly with the general problem setting we consider.  Consider
testing the hypothesis     
\begin{equation}
\label{eq:H0}
H_0: \;\, v^T \theta = 0,
\end{equation}
conditional on having observed $y\in\cP$,
where $\cP$ is a given polyhedral set, and $v$ is a given 
contrast vector.  We derive a test statistic $T(y,\cP, v)$ with
the property that  
\begin{equation}
\label{eq:T0}
T(y,\cP, v) \overset{\P_0}{\sim} \mathrm{Unif}(0,1),
\end{equation}
where $\P_0(\cdot) = \P_{v^T \theta = 0}(\, \cdot \, | \, y \in
\cP)$, 
the probability measure under $\theta$ for which $v^T \theta = 0$,
conditional on $y \in \cP$.  The assertion is that
$T(y,\cP, v)$ is
exactly uniform under the null measure, for any finite $n$ and $p$.
This statement assumes nothing about the polyhedron $\cP$, and
requires only Gaussian errors in the model  
\eqref{eq:mod}.  As it has a uniform null distribution, the test
statistic in \eqref{eq:T0}  serves as its own p-value, and so
hereafter we will refer to it in both ways (test statistic and
p-value). 

Why should we concern ourselves with an event $y \in \cP$, for a
polyhedron $\cP$?  The short answer: for many regression
procedures of interest---in particular, for the sequential algorithms  
FS, LAR, and lasso---the event that the procedure selects a given
model (after a given number of steps) can be represented in this form.
For example, consider FS after one step, with $p=3$ variables total:
the FS procedure selects variable 3, and assigns it a positive
coefficient, if and only if
\begin{align*}
X_3^T y / \|X_3\|_2 &\geq \pm X_1^T y / \|X_1\|_2, \\
X_3^T y / \|X_3\|_2 &\geq \pm X_2^T y / \|X_2\|_2. 
\end{align*}
With $X$ considered fixed, these inequalities can be compactly
represented as $\Gamma y \geq 0$, where the inequality is meant to be
interpreted componentwise, and $\Gamma \in \R^{4 \times n}$ is a
matrix with rows 
$X_3/\|X_3\|_2 \pm X_1/\|X_1\|_2$,
$X_3/\|X_3\|_2 \pm X_2/\|X_2\|_2$.
Hence if \smash{$\hat{j}_1(y)$} and \smash{$\hat{s}_1(y)$} denote the
variable and sign selected by FS at the first step, then we have shown
that  
\begin{equation*}
\Big\{ y : \hat{j}_1(y)=3, \, \hat{s}_1(y)=1 \Big\} = 
\{y : \Gamma y \geq 0\},
\end{equation*}
for a particular matrix $\Gamma$.  The right-hand side above is
clearly a polyhedron (in fact, it is a cone).  To test the significance
of the 3rd variable, conditional on it being selected at the first
step of FS, we consider the null hypothesis 
$H_0$ as in \eqref{eq:H0}, with $v=X_3$, and 
$\cP=\{y : \Gamma y \geq 0\}$.  The test statistic that we
construct in \eqref{eq:T0} is conditionally uniform under the null. 
This can be reexpressed as    
\begin{equation}
\label{eq:T03}
\P_{X_3^T \theta=0} \Big(T_1 \leq \alpha \,\Big|\, 
\hat{j}_1(y)=3, \, \hat{s}_1(y)=1\Big) = \alpha,
\end{equation}
for all $0 \leq \alpha \leq 1$.  The conditioning in \eqref{eq:T03}
is important because it properly accounts for the adaptive (i.e.,
greedy) nature of FS. Loosely speaking, it measures the magnitude of
the linear function $v_3^T y$---not among all $y$ marginally---but
among the vectors $y$ that would result in FS selecting variable 3,
and assigning it a positive coefficient.    

A similar construction holds for a general step $k$ of FS:  
letting \smash{$\hat{A}_k(y)=[\hat{j}_1(y),\ldots \hat{j}_k(y)]$}
denote the active list after $k$ steps (so that FS selects these
variables in this order) and
\smash{$\hat{s}_{A_k}(y)=[\hat{s}_1(y),\ldots \hat{s}_k(y)]$} denote
the signs of the corresponding coefficients, we have, for any fixed   
$A_k$ and \smash{$s_{A_k}$},   
\begin{equation*}
\Big\{y : \hat{A}_k(y) = A_k, \, \hat{s}_{A_k}(y) = s_{A_k}
\Big\} = \{y : \Gamma y \geq 0\},
\end{equation*}
for another matrix $\Gamma$.  With \smash{$v = (X_{A_k}^+)^T e_k$},
where $e_k$ is the $k$th standard basis vector, the hypothesis in
\eqref{eq:H0} is \smash{$e_k^T X_{A_k}^+ \theta = 0$}, 
i.e., it specifies that the last partial
regression coefficient is not significant, in a projected linear model
of $\theta$ on \smash{$X_{A_k}$}.
For $\cP=\{y : \Gamma y \geq 0\}$, the test  
statistic in \eqref{eq:T0} has the property 
\begin{equation}
\label{eq:T0k}
\P_{e_k^T X_{A_k}^+ \theta = 0}
\Big( T_k \leq \alpha \, \Big| \,
\hat{A}_k(y) = A_k, \, \hat{s}_{A_k}(y) = s_{A_k} \Big) = \alpha, 
\end{equation}
for all $0 \leq \alpha \leq 1$.  We emphasize that the p-value in
\eqref{eq:T0k} is exactly (conditionally) uniform under the null, in
finite samples. This is true without placing any restrictions on
$X$ (besides a general position assumption), and notably, without 
assuming linearity of the underlying model (i.e., without assuming
$\theta=X\beta^*$).  
Further, though we described the case for FS
here, essentially the same story holds for LAR and lasso.  The TG
p-values for FS and LAR in Table \ref{tab:prostate} correspond to
tests of hypotheses as in \eqref{eq:T0k}, i.e., tests of  
\smash{$e_k^T X_{A_k}^+ \theta = 0$}, over steps of these procedures.  

An important point to keep in mind throughout is that our testing
framework for the sequential FS, LAR, and lasso procedures is
not specific to the choice \smash{$v = (X_{A_k}^+)^T e_k$}, and
allows for the testing of arbitrary linear contrasts $v^T \theta$ 
(as long as $v$ is fixed by the conditioning event). For concreteness,
we will pay close attention to the case 
\smash{$v = (X_{A_k}^+)^T e_k$}, since it gives us a test for the
significance of variables as they enter the model, but many other
choices of $v$ could be interesting and useful.  

\subsection{Conditional confidence intervals}

A strength of our framework is that our test statistics
can be inverted to make coverage statements about 
arbitrary linear contrasts of $\theta$.
In particular, consider the hypothesis test defined by
\smash{$v= (X_{A_k}^+)^T e_k$}, for the $k$th step of FS  
(similar results apply to LAR and lasso). By inverting our
test statistic in \eqref{eq:T0k}, we obtain a conditional confidence 
interval $I_k$ satisfying  
\begin{equation}
\label{eq:cci}
\P \Big( e_k^T X_{A_k}^+ \theta \in I_k \, \Big| \, 
\hat{A}_k(y) = A_k, \, \hat{s}_{A_k}(y) = s_{A_k} \Big) = 1-\alpha.
\end{equation}
In words, the random interval $I_k$ traps with probability
$1-\alpha$ the coefficient of the last selected variable, in a
regression model that projects $\theta$ onto $X_{A_k}$, conditional
on FS having 
selected variables $A_k$ with signs $s_{A_k}$, after $k$ steps of the
algorithm. As \eqref{eq:cci} is true conditional on 
$\Gamma y \geq 0$, we can also marginalize this statement to yield  
\begin{equation}
\label{eq:si}
\P \Big( e_k^T X_{\hat{A}_k}^+ \theta \in I_k \Big)
= 1-\alpha.
\end{equation}
Note that \smash{$\hat{A}_k=\hat{A}_k(y)$} denotes the random active 
list after $k$ FS steps. Written in the unconditional form
\eqref{eq:si}, we call $I_k$ a {\it selection interval} for the
random quantity \smash{$e_k^T X_{\hat{A}_k}^+ \theta$}.  We use this
name to emphasize the difference in interpretation here, versus the
conditional case: the selection interval covers a {\it moving target},
as both the identity of the $k$th selected variable, and the
identities of all the previously selected variables (which play a role 
in the $k$th partial regression coefficient of $\theta$ on 
\smash{$X_{\hat{A}_k}$}), are random---they depend on $y$.  
 
We have seen that our intervals can be interpreted
conditionally, as in \eqref{eq:cci}, or unconditionally, as in
\eqref{eq:si}.  The  
former is perhaps more aligned with the spirit of
post-selection inference, as it guarantees coverage, conditional 
on the output of our selection procedure.  But the latter
interpretation is also interesting, and in a way,
cleaner. From the unconditional point of view, 
we can roughly think of the selection interval $I_k$ as covering  
the project population coefficient of the ``$k$th most important
variable'' as deemed by  the sequential regression procedure at hand
(FS, LAR, or lasso). Figure \ref{fig:prostate} displays 90\%
confidence intervals at each step of FS, run on the prostate cancer
data set discussed in the introduction.   

\begin{figure}[htb]
\centering
\includegraphics[width=0.75\textwidth]{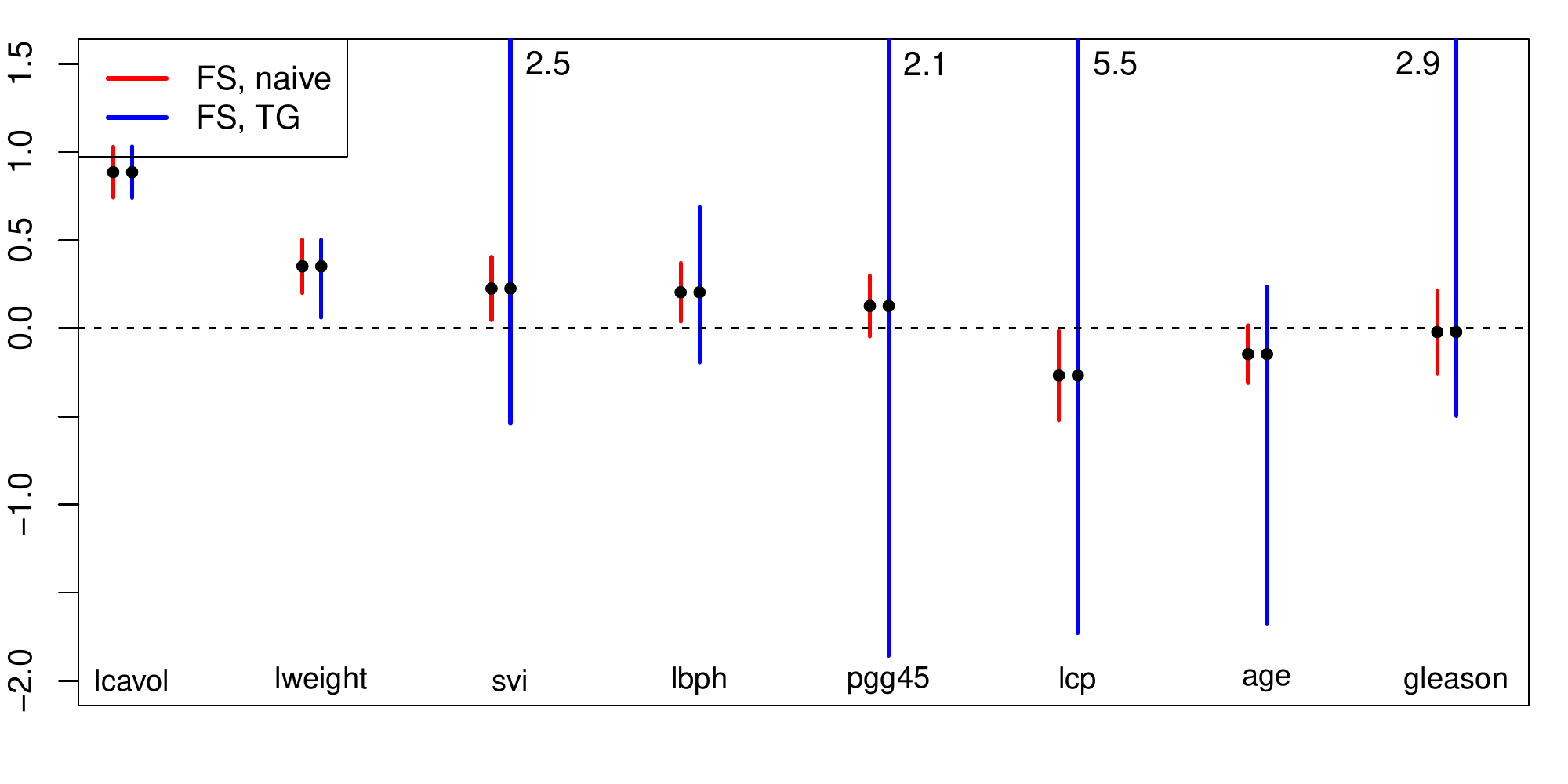}
\caption{\it Prostate cancer data example: naive confidence  
  intervals and 90\% conditional confidence intervals
  (or, selection intervals) computed using the TG (truncated Gaussian)
  statistics, for FS (forward stepwise). Black dots denote
  the estimated partial regression coefficients for the variable to
  enter, in a regression on the active submodel. The upper confidence
  limits for some parameters exceed the range for the y-axis on the
  plot, and their actual values marked at the appropriate places.} 
\label{fig:prostate}
\end{figure}

\subsection{Marginalization}
\label{sec:margin}

Similar to the formation of selection intervals in the last
subsection, we note that any amount of coarsening, i.e.,
marginalization, of the conditioning set in \eqref{eq:T0k} results in
a valid interpretation for p-values.  For example, by marginalizing
over all possible sign lists $s_{A_k}$ associated with $A_k$, we
obtain   
\begin{equation*}
\P_{e_k^T X_{A_k}^+ \theta = 0}
\Big( T_k \leq \alpha \, \Big| \,
\hat{A}_k(y) = A_k \Big) = \alpha, 
\end{equation*}
so that the conditioning event only encodes the observed active   
list, and not the observed signs.  Thus we have another
possible interpretation for the statistic (p-value) $T_k$: under the
null measure, which conditions on FS having selected the
variables $A_k$ (regardless of their signs), $T_k$ is 
uniformly distributed.  The idea of marginalization will be important
when we discuss details of the constructed tests for LAR and
lasso.   

\subsection{Model selection}
\label{sec:model_sel}

How can the inference tools of this paper be translated into rigorous
rules for model selection?  This is of course an important (and
difficult) question, and we do not yet possess a complete
understanding of the model 
selection problem, though it is the topic of future work.  Below we
describe three possible strategies for model selection, using 
the p-values that come from our inference framework.  We do not have 
extensive theory to explain or evaluate them, 
but all are implemented in the R package {\tt selectiveInference}.  

\begin{itemize}
\item {\it Inference from sequential p-values.}  We have advocated the
  idea of computing p-values across steps of the 
  regression procedure at hand, as exemplified in Table
  \ref{tab:prostate}.  Here at each step $k$, the p-value
  tests \smash{$e_k^T X_{A_k}^+ \theta = 0$},
  i.e., tests the significance of the variable to enter the active
  set $A_k$, in a projected linear model of the mean $\theta$ on the
  variables in $A_k$.  \citet{fdrlasso} propose 
  sequential stopping rules using such p-values, including the
  ``ForwardStop'' rule, which guarantees false discovery rate (FDR)
  control at a given level.  For example, the ForwardStop rule at a
  nominal 
  10\% FDR level, applied to the TG p-values from the LAR path
  for the prostate cancer data (the last column of Table
  \ref{tab:prostate}), yields a model with 3 predictors. 
  However, it should be noted that the guarantee for FDR control for
  ForwardStop in \citet{fdrlasso} assumes that the p-values are
  independent, and this is not true for the p-values from our
  inference framework.  

\item {\it Inference at a fixed step $k$.}  Instead of looking at
  p-values across steps, we could instead fix a step $k$, and
  inspect the p-values corresponding to the hypotheses \smash{$e_j^T  
    X_{A_j}^+ \theta = 0$}, for $j=1,\ldots k$. This tests the 
  significance of every variable, among the rest in the discovered
  active set $A_k$, and it still fits within our developed
  framework: we are just utilizing different linear contrasts
  \smash{$v= (X_{A_j}^+)^T e_j$} of the mean $\theta$, for 
  $j=1,\ldots k$. The results of these tests are
  genuinely different, in terms of their statistical meaning, than the
  results from testing variables as they enter the model (since the 
  active set changes at each step).  Given the p-values corresponding
  to all active variables at a given step $k$, we could, e.g., perform
  a Bonferroni correction, and declare significance at the level
  $\alpha/k$, in order to select a model (a subset of $A_k$) with type
  I error controlled at the level $\alpha$.  For example, when we
  apply this strategy at step $k=5$ of the LAR path for the prostate   
  cancer data, and examine Bonferroni corrected p-values at the 0.05  
  level, only two predictors (lweight and pgg45) end up being
  significant.  

\item {\it Inference at an adaptively selected step $k$.}  Lastly,
  the above scheme for inference could be conducted with a step 
  number $k$ that is adaptively selected, instead of fixed ahead of
  time, provided the selection event that determines $k$ is a  
  polyhedral set in $y$.  A specific example of this is an AIC-style  
  rule, which chooses the step $k$ after which the AIC criterion
  rises, say, twice in a row.  We omit the details, but verifying that
  such a stopping rule defines a polyhedral constraint for $y$
  is straightforward (it follows essentially the same logic as the 
  arguments that show the FS selection event is itself
  polyhedral, which are given in Section \ref{sec:poly_fs}).  Hence, by
  including all the necessary polyhedral constraints---those that
  determine $k$, and those that subsequently determine the selected
  model---we can compute p-values for each of the active variables at
  an adaptively selected step $k$, using the inference tools derived
  in this paper.  When this method is applied to the prostate cancer
  data set, the AIC-style rule (which stops once it sees two
  consecutive rises in the AIC criterion) chooses $k=4$.  Examining
  Bonferroni corrected p-values at step $k=4$, only one predictor
  (lweight) remains significant at the 0.05 level.
\end{itemize}

\section{Conditional Gaussian inference after polyhedral selection}  
\label{sec:poly}

In  this section, we present a few key results on Gaussian contrasts 
conditional on polyhedral events, which provides a basis for the
methods proposed in this paper.  The same core development appears in
\citet{exactlasso}; for brevity, we refer the reader to the latter
paper for formal proofs.  We assume
$y \sim N(\theta,\Sigma)$,
where $\theta \in \R^n$ is unknown, but $\Sigma \in \R^{n\times n}$ 
is known. This generalizes our setup in \eqref{eq:mod} 
(allowing for a general error covariance matrix).  We also 
consider a generic polyhedron $\cP = \{y : \Gamma y \geq u\}$, 
where $\Gamma \in \R^{m\times n}$ and $u \in \R^m$ are fixed, and the  
inequality is to be interpreted componentwise.  
For a fixed $v \in \R^n$, our goal is to make inferences about $v^T
\theta$ conditional on $y \in \cP$. Next, we provide a helpful
alternate representation for $\cP$. 

\begin{lemma}[\textbf{Polyhedral selection as truncation}]
\label{lem:truncate}
For any $\Sigma,v$ such that $v^T \Sigma v \not= 0$,
\begin{equation}
\label{eq:truncate}
\Gamma y \geq u \iff
\V^\mathrm{lo}(y) \leq v^T y \leq \V^\mathrm{up}(y), 
\, \V^0(y) \leq 0,
\end{equation}
where 
\begin{align}
\label{eq:v_lower}
\V^\mathrm{lo}(y) &= \max_{j: \rho_j > 0} \,
\frac{u_j  - (\Gamma y)_j + \rho_jv^T y}{\rho_j}, \\ 
\label{eq:v_upper} 
\V^\mathrm{up}(y) &= \min_{j: \rho_j < 0} \,
\frac{ u_j - (\Gamma y)_j + \rho_jv^T y}{\rho_j}, \\
\label{eq:v_zero}
\V^0(y) &= \max_{j: \rho_j = 0} \,
 u_j - (\Gamma y)_j,
\end{align}
and $\rho = \Gamma\Sigma v / v^T\Sigma v$.
Moreover, the triplet $(\V^\mathrm{lo},\V^\mathrm{up},\V^0)(y)$ is
independent of $v^T y$.    
\end{lemma}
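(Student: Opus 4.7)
The plan is to change coordinates so that $v^T y$ is separated out from a complementary Gaussian coordinate, and then rewrite each of the $m$ half-space constraints $(\Gamma y)_j \geq u_j$ as a condition on $v^T y$ whose coefficients depend only on the complementary coordinate.

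First I would set $c = \Sigma v / v^T \Sigma v$ (well defined since $v^T \Sigma v \neq 0$) and define $z = y - c\, v^T y$. A direct computation gives $\Cov(z, v^T y) = \Sigma v - c \cdot v^T \Sigma v = 0$, so by joint Gaussianity $z$ is independent of $v^T y$. Writing $y = z + c\, v^T y$ and applying $\Gamma$ row by row, each constraint becomes
\[
(\Gamma z)_j + \rho_j\, v^T y \;\geq\; u_j,
\]
with $\rho = \Gamma \Sigma v / v^T \Sigma v$ exactly as in the lemma.

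Next I would split into three cases based on the sign of $\rho_j$. If $\rho_j > 0$, the $j$th constraint is equivalent to $v^T y \geq (u_j - (\Gamma z)_j)/\rho_j$; if $\rho_j < 0$, to $v^T y \leq (u_j - (\Gamma z)_j)/\rho_j$; if $\rho_j = 0$, to $u_j - (\Gamma z)_j \leq 0$, a condition that does not involve $v^T y$ at all. Taking the maximum of the lower bounds, the minimum of the upper bounds, and the maximum of the $\rho_j = 0$ slacks reproduces $\V^{\mathrm{lo}}$, $\V^{\mathrm{up}}$, and $\V^0$ once I substitute the identity $(\Gamma z)_j = (\Gamma y)_j - \rho_j v^T y$ back into $(u_j - (\Gamma z)_j)/\rho_j$ and $u_j - (\Gamma z)_j$, which yields precisely the formulas \eqref{eq:v_lower}--\eqref{eq:v_zero}.

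Finally, since each of $\V^{\mathrm{lo}}, \V^{\mathrm{up}}, \V^0$ is a measurable function of $z$ alone, independence of the triplet from $v^T y$ is inherited directly from the first step. The argument is essentially bookkeeping; the only point that deserves care is choosing $c$ so that the residual $z$ is Gaussian-independent of $v^T y$ while simultaneously making the coefficient of $v^T y$ in the $j$th row of $\Gamma y$ equal to $\rho_j$, and then correctly sorting rows into the three cases.
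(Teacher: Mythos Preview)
Your proposal is correct and is essentially the approach the paper sketches (and that \citet{exactlasso} carries out in full): your decomposition $y = z + c\,v^T y$ with $c=\Sigma v/v^T\Sigma v$ is exactly the general-$\Sigma$ version of the paper's $y = P_{v^\perp}y + P_v y$, and the row-by-row sign split on $\rho_j$ reproduces the truncation limits identically. The only thing worth adding is the observation in Remark~\ref{rem:spacing}'s spirit that the equivalence \eqref{eq:truncate} is purely algebraic and holds for every $y$, with Gaussianity entering only through the independence claim---you already have this implicitly, but stating it explicitly matches the paper's emphasis.
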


\begin{remark} 
The result in \eqref{eq:truncate}, with 
$\V^\mathrm{lo},\V^\mathrm{up},\V^0$ defined as in
\eqref{eq:v_lower}--\eqref{eq:v_zero}, is a deterministic result that  
holds for all $y$. Only the last independence result depends on
normality of $y$.  
\end{remark}

See Figure \ref{fig:lee} for a geometric illustration of this lemma.
Intuitively, we can explain the result as follows, assuming for
simplicity (and without a loss of generality) that $\Sigma=I$.
We first decompose $y=P_v y + P_{v^\perp} y$, where 
$P_v y = vv^T y /\|v\|^2_2$
is the projection of $y$ along $v$, and $P_{v^\perp} y=y-P_vy$ is the
projection onto the orthocomplement of $v$.  Accordingly, we view $y$  
as a deviation from $P_{v^\perp} y$, of an amount $v^T y$, along the line
determined by $v$.  The quantities $\V^\mathrm{lo}$ and
$\V^\mathrm{up}$ describe how far we can deviate on either side of
$P_{v^\perp} y$, before $y$ leaves the polyhedron.  This gives rise to
the inequality $\V^\mathrm{lo} \leq v^T y \leq \V^\mathrm{up}$.  Some
faces of the polyhedron, however, may be perfectly aligned with $v$
(i.e., their normal vectors may be orthogonal to $v$), and $\V^0$
accounts for this by checking that $y$ lies on the correct side of
these faces.  
    
\begin{figure}[htb]
\centering
\includegraphics[width=0.5\textwidth]{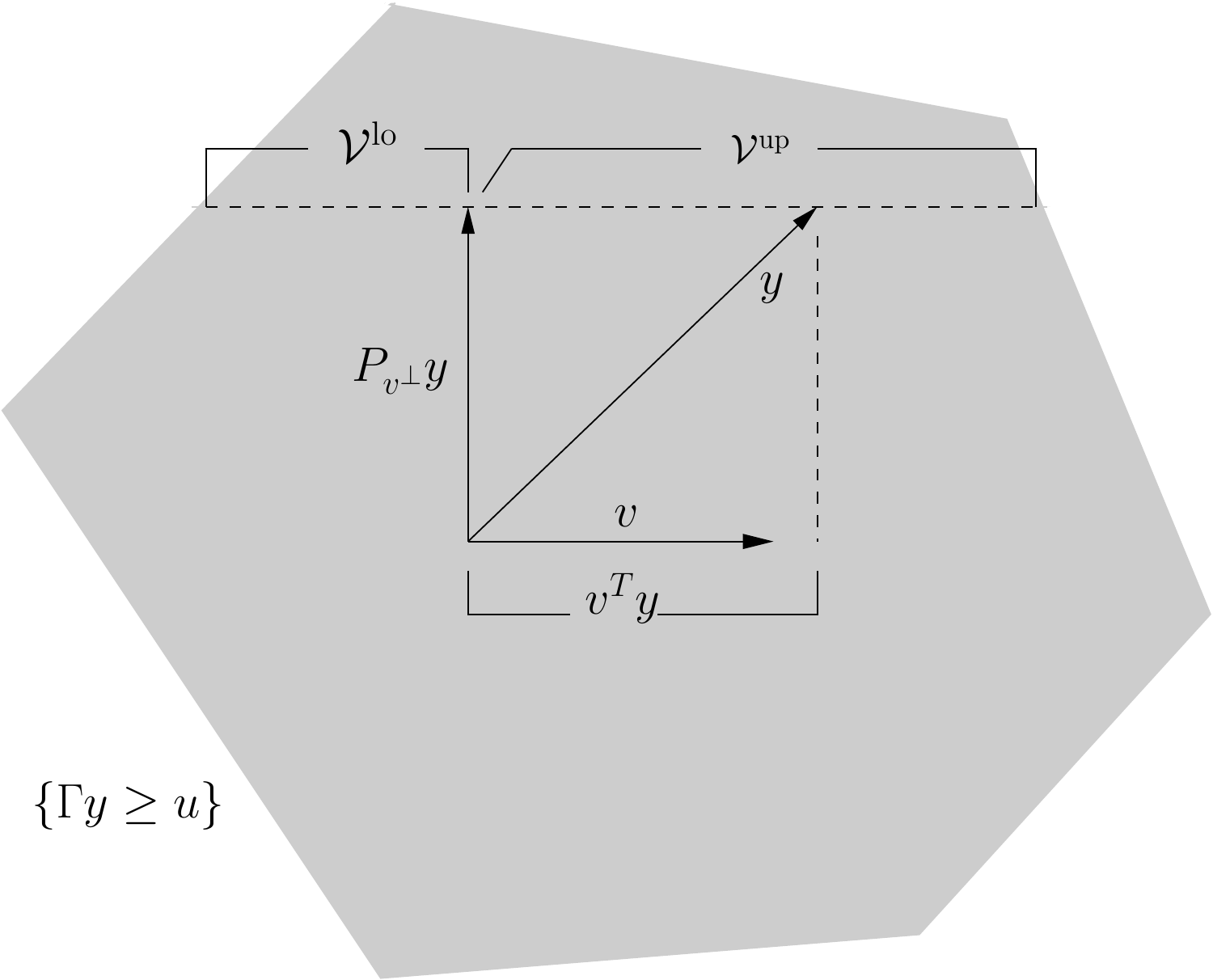}  
\caption{\it Geometry of polyhedral selection as 
  truncation. For simplicity, we
  assume that $\Sigma=I$ (otherwise standardize as
  appropriate).  The shaded gray area is the polyhedral set 
  $\{y : \Gamma y \geq u\} $.  By breaking up $y$ into its projection
  onto $v$ and its projection onto the orthogonal complement of $v$,
  we see that $\Gamma y \geq u$ holds if and only if $v^T y$ does not
  deviate too far from $P_{v^\perp} y$, hence trapping it in between
  bounds $\V^\mathrm{lo},\V^\mathrm{up}$.  Furthermore, these bounds
  $\V^\mathrm{lo},\V^\mathrm{up}$ are functions of $P_{v^\perp} y$
  alone, so under normality, they are independent of $v^T y$.}
\label{fig:lee}
\end{figure}

From Lemma \ref{lem:truncate}, the distribution of 
any linear function $v^Ty$, conditional on the selection 
$\Gamma y \geq u$,  
can be written as the conditional distribution 
\begin{equation}
\label{eq:truncnorm}
v^Ty \, \big|\, \V^\mathrm{lo}(y) \leq v^T y \leq
\V^\mathrm{up}(y), \, \V^0(y) \leq 0.
\end{equation}
Since $v^T y$ has a Gaussian distribution, the above
is a truncated Gaussian distribution (with random
truncation limits).  A simple transformation leads to 
a pivotal statistic, which will be critical for inference about  
$v^T \theta$.   

\begin{lemma}[\textbf{Pivotal statistic after polyhedral selection}]
\label{lem:pivot}
Let $\Phi(x)$ denote the standard normal cumulative distribution
function (CDF), and let \smash{$F_{\mu, \sigma^2}^{[a, b]}$} denote
the CDF of a $N(\mu, \sigma^2)$ random variable truncated to lie in
$[a,b]$, i.e.,  
\begin{equation*}
F_{\mu,\sigma^2}^{[a, b]}(x) = 
\frac{\Phi((x-\mu)/\sigma)  -\Phi((a-\mu)/\sigma)}
{\Phi((b-\mu)/\sigma) - \Phi((a-\mu)/\sigma)}.
\end{equation*}
For $v^T \Sigma v \not= 0$, the statistic 
\smash{$F_{v^T\theta, v^T \Sigma v}^
{[\V^\mathrm{lo}, \V^\mathrm{up}]} (v^T y)$} is a pivotal quantity
conditional on $\Gamma y \geq u$:
\begin{equation}
\label{eq:pivot}
\P\Big( F_{v^T\theta, v^T \Sigma v}
^{[\V^\mathrm{lo}, \V^\mathrm{up}]} (v^T y)
\leq \alpha \,\Big|\, \Gamma y \geq u \Big) = \alpha,
\end{equation}
for any $0 \leq \alpha \leq 1$, 
where $\V^\mathrm{lo}$, $\V^\mathrm{up}$ are as defined in
\eqref{eq:v_lower}, \eqref{eq:v_upper}.
\end{lemma}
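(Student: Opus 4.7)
The plan is to deduce the pivot from Lemma \ref{lem:truncate} by conditioning on enough ancillary information to reduce the problem to a standard probability integral transform for a Gaussian truncated to a \emph{deterministic} interval. First I would invoke Lemma \ref{lem:truncate} to replace the event $\{\Gamma y \geq u\}$ with the equivalent event
\begin{equation*}
\cE(y) \;=\; \bigl\{ \V^\mathrm{lo}(y) \leq v^T y \leq \V^\mathrm{up}(y),\; \V^0(y) \leq 0 \bigr\}.
\end{equation*}
The key structural fact I would use next is the independence clause at the end of Lemma \ref{lem:truncate}: under $y \sim N(\theta,\Sigma)$, the triple $(\V^\mathrm{lo},\V^\mathrm{up},\V^0)(y)$ is independent of $v^T y$. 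This is what lets us move from random truncation limits to fixed ones by conditioning on them.

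Next I would condition on $(\V^\mathrm{lo}(y), \V^\mathrm{up}(y), \V^0(y)) = (a,b,c)$ with $c \leq 0$. By independence, the conditional distribution of $v^T y$ given this triple is unchanged from its marginal $N(v^T\theta, v^T\Sigma v)$. Then intersecting with $\cE(y)$ further restricts $v^T y$ to lie in $[a,b]$ (the constraint $\V^0 \leq 0$ is automatic because we conditioned on $c \leq 0$). Hence, conditional on $\{(\V^\mathrm{lo},\V^\mathrm{up},\V^0) = (a,b,c)\} \cap \cE(y)$, the random variable $v^T y$ has exactly the distribution of an $N(v^T\theta, v^T\Sigma v)$ truncated to $[a,b]$, whose CDF is $F_{v^T\theta, v^T\Sigma v}^{[a,b]}$ by definition.

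Applying the probability integral transform to this conditional distribution gives
\begin{equation*}
\P\Bigl( F_{v^T\theta, v^T\Sigma v}^{[a,b]}(v^T y) \leq \alpha \,\Big|\, (\V^\mathrm{lo},\V^\mathrm{up},\V^0)(y) = (a,b,c),\; \cE(y) \Bigr) \;=\; \alpha
\end{equation*}
for every $(a,b,c)$ with $c \leq 0$ in the support of the triple. Because $(a,b)$ is exactly the realized value of $(\V^\mathrm{lo}(y),\V^\mathrm{up}(y))$ on this conditional event, substituting the random limits back in and marginalizing over the triple (using the tower property) yields \eqref{eq:pivot}.

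The main obstacle I anticipate is the bookkeeping around the random truncation limits: one has to be careful that the function $F_{\mu,\sigma^2}^{[\cdot,\cdot]}$ is evaluated at the same $(a,b)$ that defines the conditioning, so that the integral transform applies to a genuinely deterministic truncation interval. Once the independence of $(\V^\mathrm{lo},\V^\mathrm{up},\V^0)$ and $v^T y$ is in hand, this is essentially automatic, but it is the single place where Gaussianity enters in an essential way; the deterministic rewriting in Lemma \ref{lem:truncate} alone is not enough.
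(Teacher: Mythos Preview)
Your argument is correct and is exactly the standard route: use Lemma~\ref{lem:truncate} to rewrite $\{\Gamma y \geq u\}$ in terms of $(\V^\mathrm{lo},\V^\mathrm{up},\V^0)$ and $v^T y$, condition on the triple to freeze the truncation interval, apply the probability integral transform to the resulting truncated Gaussian, and average out via the tower property. The paper itself does not give a proof of Lemma~\ref{lem:pivot}; it explicitly defers the formal argument to \citet{exactlasso}, whose proof proceeds along precisely the lines you describe.
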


\begin{remark}
A referee of this paper astutely noted the connection
between Lemma \ref{lem:pivot} and classic results on inference in 
an exponential 
family model (e.g., Chapter 4 of \citet{tsh}), in the presence of
nuisance parameters.  The analogy is, in a
rotated coordinate system, the parameter of interest is $v^T \theta$,
and the nuisance parameters correspond to $P_v^\perp \theta$.  This
connection is developed in \citet{optimalinf}.
\end{remark}

The pivotal statistic in the lemma leads to valid conditional p-values
for testing the null hypothesis $H_0: v^T \theta =0$, and
correspondingly, conditional confidence intervals for $v^T \theta$.
We divide our presentation into two parts, on one-sided and two-sided
inference.   

\subsection{One-sided conditional inference}

The result below is a direct consequence of the pivot in Lemma
\ref{lem:pivot}.   

\begin{lemma}[\textbf{One-sided conditional inference 
    after polyhedral selection}]  
\label{lem:condinf1}
Given $v^T \Sigma v \not= 0$, suppose that we are interested in
testing  
\begin{equation*}
H_0 : \;\, v^T \theta = 0 \;\;\; \text{against} \;\;\;
H_1 : \;\, v^T \theta > 0.
\end{equation*}
Define the test statistic
\begin{equation}
\label{eq:T1}
T = 1-F_{0, v^T \Sigma v}^
{[\V^\mathrm{lo}, \V^\mathrm{up}]} (v^T y),
\end{equation}
where we use the notation of Lemma \ref{lem:pivot} for the 
truncated normal CDF.
Then $T$ is a valid p-value for $H_0$, conditional on 
$\Gamma y \geq u$: 
\begin{equation}
\label{eq:T1null}
\P_{v^T \theta =0} (T \leq \alpha \, | \, \Gamma y \geq u ) = \alpha,
\end{equation}
for any $0 \leq \alpha \leq 1$.  Further, define $\delta_\alpha$ to
satisfy 
\begin{equation}
\label{eq:delta1}
1- F_{\delta_\alpha, v^T \Sigma v}
^{[\V^\mathrm{lo}, \V^\mathrm{up}]} (v^T y) = \alpha.
\end{equation}
Then $I=[\delta_\alpha,\infty)$ is a valid one-sided confidence
interval for $v^T \theta$, conditional on $\Gamma y \geq u$:
\begin{equation}
\label{eq:coverage1}
\P ( v^T \theta \geq \delta_\alpha 
\, | \, \Gamma y \geq u) = 1 - \alpha. 
\end{equation}
\end{lemma}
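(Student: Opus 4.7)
The plan is to obtain both claims as essentially direct consequences of the pivotal statement in Lemma \ref{lem:pivot}, combined with a monotonicity argument for the truncated Gaussian CDF in its mean parameter.

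First I would dispatch the p-value claim \eqref{eq:T1null}. Under the null $v^T\theta=0$, Lemma \ref{lem:pivot} says that $F_{0,\,v^T\Sigma v}^{[\V^\mathrm{lo},\V^\mathrm{up}]}(v^Ty)$ is uniform on $(0,1)$ conditional on $\Gamma y\geq u$. Since $U\sim\mathrm{Unif}(0,1)$ implies $1-U\sim\mathrm{Unif}(0,1)$, the statistic $T$ defined in \eqref{eq:T1} is also uniform on $(0,1)$ under this conditional measure, giving \eqref{eq:T1null} immediately.

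For the coverage claim \eqref{eq:coverage1}, the key analytic fact I would use is that for fixed $x$, $\sigma^2$, and truncation endpoints $a<b$, the truncated normal CDF $F_{\mu,\sigma^2}^{[a,b]}(x)$ is strictly decreasing in $\mu$. This is a standard consequence of the monotone likelihood ratio property of the Gaussian location family restricted to $[a,b]$. Fixing $y$ (and hence $\V^\mathrm{lo}(y),\V^\mathrm{up}(y)$), define
\begin{equation*}
g(\mu) \;=\; 1-F_{\mu,\,v^T\Sigma v}^{[\V^\mathrm{lo}(y),\V^\mathrm{up}(y)]}(v^Ty),
\end{equation*}
which is then strictly increasing in $\mu$. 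In particular $\delta_\alpha=\delta_\alpha(y)$ defined by \eqref{eq:delta1} exists and is unique, and the event $\{v^T\theta\geq\delta_\alpha\}$ coincides with $\{g(v^T\theta)\geq\alpha\}$, i.e., with
\begin{equation*}
\Bigl\{F_{v^T\theta,\,v^T\Sigma v}^{[\V^\mathrm{lo},\V^\mathrm{up}]}(v^Ty)\leq 1-\alpha\Bigr\}.
\end{equation*}
Applying Lemma \ref{lem:pivot} (at the true mean $v^T\theta$) now yields $\P(v^T\theta\geq\delta_\alpha\mid\Gamma y\geq u)=1-\alpha$, which is \eqref{eq:coverage1}.

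The only non-routine step is the monotonicity of $F_{\mu,\sigma^2}^{[a,b]}(x)$ in $\mu$; I would justify it either by differentiating the closed-form ratio $(\Phi((x-\mu)/\sigma)-\Phi((a-\mu)/\sigma))/(\Phi((b-\mu)/\sigma)-\Phi((a-\mu)/\sigma))$ and verifying negativity of the derivative, or by invoking stochastic ordering of truncated location families via MLR. Everything else is a mechanical inversion of the pivot, and the conditioning set $\{\Gamma y\geq u\}$ plays no role beyond supplying the distributional statement already given by Lemma \ref{lem:pivot}.
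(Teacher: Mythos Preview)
Your proposal is correct and matches the paper's approach exactly: the paper states that the lemma is a direct consequence of the pivot in Lemma~\ref{lem:pivot}, and explicitly invokes the monotonicity of the truncated normal survival function $1-F_{\mu,\sigma^2}^{[a,b]}(x)$ in $\mu$ to justify both the one-sided alternative and the coverage of the confidence interval. Your write-up simply fills in the mechanical details the paper leaves implicit.
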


Note that by defining our test statistic in terms
of the conditional survival function, as in \eqref{eq:T1},  
we are implicitly aligning ourselves to have power against the
one-sided alternative $H_1 : v^T \theta > 0$.  This is because the
truncated normal survival function
\smash{$1-F_{\mu,\sigma^2}^{[a,b]}(x)$}, evaluated at any fixed point 
$x$, is monotone increasing in $\mu$.  The same fact
(monotonicity of the survival function in $\mu$) 
validates the coverage of the constructed confidence interval in 
\eqref{eq:delta1}, \eqref{eq:coverage1}. 

\subsection{Two-sided conditional inference}

For a two-sided alternative, we use a simple modification
of the one-sided test in Lemma \ref{lem:condinf1}.

\begin{lemma}
[\textbf{Two-sided conditional inference after polyhedral selection}]    
\label{lem:condinf2} 
Given $v^T \Sigma v \not= 0$, suppose that we are interested in
testing  
\begin{equation*}
H_0 : \;\, v^T \theta = 0 \;\;\; \text{against} \;\;\;
H_1 : \;\, v^T \theta \not= 0.
\end{equation*}
Define the test statistic
\begin{equation}
\label{eq:T2}
T = 2 \cdot \min\Big\{F_{0, v^T \Sigma v}
^{[\V^\mathrm{lo}, \V^\mathrm{up}]} (v^T y), \;
1-F_{0, v^T \Sigma v}
^{[\V^\mathrm{lo}, \V^\mathrm{up}]} (v^T y) \Big\},
\end{equation}
where we use the notation of Lemma \ref{lem:pivot} for the 
truncated normal CDF.  Then $T$ is a valid p-value for
$H_0$, conditional on $\Gamma y \geq u$: 
\begin{equation}
\label{eq:T2null}
\P_{v^T \theta =0} (T \leq \alpha \, | \, \Gamma y \geq u ) = \alpha,
\end{equation}
for any $0 \leq \alpha \leq 1$.  Further,
define $\delta_{\alpha/2},\delta_{1-\alpha/2}$ to satisfy 
\begin{align}
\label{eq:delta2a}
1-F_{\delta_{\alpha/2}, v^T \Sigma v}
^{[\V^\mathrm{lo}, \V^\mathrm{up}]} (v^T y) &= \alpha/2, \\ 
\label{eq:delta2b}
1-F_{\delta_{1-\alpha/2}, v^T \Sigma v}
^{[\V^\mathrm{lo}, \V^\mathrm{up}]} (v^T y) &= 1-\alpha/2.
\end{align}
Then
\begin{equation}
\label{eq:coverage2}
\P ( \delta_{\alpha/2} \leq v^T \theta \leq \delta_{1-\alpha/2} 
\, | \, \Gamma y \geq u) = 1 - \alpha. 
\end{equation} 
\end{lemma}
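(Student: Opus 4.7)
The plan is to reduce everything to the pivotal statement in Lemma \ref{lem:pivot}. Setting
\[
U \;=\; F_{v^T\theta,\, v^T \Sigma v}^{[\V^\mathrm{lo},\,\V^\mathrm{up}]}(v^T y),
\]
Lemma \ref{lem:pivot} tells us that, conditional on $\Gamma y \geq u$, $U \sim \mathrm{Unif}(0,1)$. Under $H_0 : v^T\theta = 0$, this specializes to $U = F_{0, v^T \Sigma v}^{[\V^\mathrm{lo},\,\V^\mathrm{up}]}(v^T y) \sim \mathrm{Unif}(0,1)$, so the two-sided statistic in \eqref{eq:T2} is just $T = 2\min(U, 1-U)$.

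First I would verify \eqref{eq:T2null} by the standard symmetrization calculation for two-sided p-values built from a $\mathrm{Unif}(0,1)$ pivot: for $0 \leq \alpha \leq 1$,
\[
\P_{v^T\theta=0}(T \leq \alpha \mid \Gamma y \geq u)
= \P(U \leq \alpha/2 \mid \Gamma y \geq u) + \P(U \geq 1 - \alpha/2 \mid \Gamma y \geq u)
= \tfrac{\alpha}{2} + \tfrac{\alpha}{2} = \alpha.
\]
This gives the uniform null distribution claim for $T$.

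For the interval, I would use the (deterministic) fact that $\mu \mapsto F_{\mu,\sigma^2}^{[a,b]}(x)$ is strictly decreasing and continuous in $\mu$ for any fixed $a < b$, $\sigma^2 > 0$, $x$. On the event $\V^\mathrm{lo}(y) < \V^\mathrm{up}(y)$ (which holds with probability one under Gaussian $y$), monotonicity implies that the equations \eqref{eq:delta2a} and \eqref{eq:delta2b} have unique solutions with $\delta_{\alpha/2} \leq \delta_{1-\alpha/2}$, and moreover
\[
\delta_{\alpha/2} \leq v^T\theta \leq \delta_{1-\alpha/2}
\;\iff\;
\alpha/2 \;\leq\; F_{v^T\theta,\, v^T \Sigma v}^{[\V^\mathrm{lo},\,\V^\mathrm{up}]}(v^T y) \;\leq\; 1 - \alpha/2,
\]
i.e., the coverage event is exactly $\{U \in [\alpha/2, 1-\alpha/2]\}$. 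Applying the pivotal property of $U$ (which holds for the true $v^T\theta$, not merely under the null) gives
\[
\P(\delta_{\alpha/2} \leq v^T\theta \leq \delta_{1-\alpha/2} \mid \Gamma y \geq u)
= \P(U \in [\alpha/2, 1-\alpha/2] \mid \Gamma y \geq u)
= 1 - \alpha,
\]
which is \eqref{eq:coverage2}.

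The only substantive issue is the monotonicity of $F_{\mu,\sigma^2}^{[a,b]}(x)$ in $\mu$, since both the numerator and denominator in the definition depend on $\mu$. I would verify this by differentiating, or by noting that for a truncated Gaussian family indexed by $\mu$ the CDF at any fixed argument is a monotone function of the location parameter (this is a standard monotone likelihood ratio argument for the truncated normal). Everything else is pure bookkeeping on top of Lemma \ref{lem:pivot}, so I expect no further obstacle.
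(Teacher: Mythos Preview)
Your proposal is correct and follows essentially the same route as the paper: the null distribution of $T$ is obtained from the fact that $2\min(U,1-U)$ is $\mathrm{Unif}(0,1)$ when $U$ is, and the interval coverage is deduced from the monotonicity of the truncated normal survival function in $\mu$ together with the pivotal property of Lemma~\ref{lem:pivot}. The paper's proof is just a terse statement of exactly these two ingredients, so your write-up is simply a more detailed version of the same argument.
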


The test statistic in \eqref{eq:T2}, defined in terms of the 
minimum of the truncated normal CDF and survival function, has power
against the two-sided alternative $H_1 : v^T\theta \not= 0$.  The
proof of its null distribution in \eqref{eq:T2null} follows from the
simple fact that if $U$ is a standard uniform random variable, then 
so is $2 \cdot \min\{U,1-U\}$.  The
construction of the confidence interval in \eqref{eq:delta2a},
\eqref{eq:delta2b}, \eqref{eq:coverage2} again uses the monotonicity
of the truncated normal survival function in the underlying mean
parameter.

\section{Exact selection-adjusted tests for FS, LAR, lasso} 
\label{sec:seq_tests}

Here we apply the tools of Section \ref{sec:poly} to the case of
selection in regression using the forward stepwise (FS), least angle
regression (LAR), or lasso procedures.  We assume that the columns of
$X$ are in general position.  This means that for any 
$k<\min\{n,p\}$, any subset of columns $X_{j_1},\ldots X_{j_k}$, and
any signs $\sigma_1,\ldots \sigma_k \in \{-1,1\}$, the affine span of  
$\sigma_1X_{j_1},\ldots \sigma_kX_{j_k}$ does not contain any of the
remaining columns, up to a sign flip 
(i.e., does not contain any of $\pm X_j$, $j\not=j_1,\ldots j_k$).
One can check that this implies the sequence of FS estimates is
unique.  It also implies that the LAR and lasso paths of estimates are
uniquely determined \citep{lassounique}.
The general position assumption is not at all stringent; e.g., 
if the columns of $X$ are drawn according to a continuous probability
distribution, then they are in general position almost surely. 

Next, we show that the model selection events for FS, LAR, and lasso
can be characterized as polyhedra (indeed, cones) of the form  
$\{y : \Gamma y \geq 0\}$.  After this, we describe the forms of
the exact conditional tests and intervals, as provided by Lemmas
\ref{lem:truncate}--\ref{lem:condinf2}, for these procedures, and
discuss some important practical issues. 

\subsection{Polyhedral sets for FS selection events}
\label{sec:poly_fs}

Recall that FS repeatedly adds the 
predictor to the current active model that most improves the
fit.  After each addition, the active coefficients are recomputed by
least squares regression on the active predictors. This process
ends when all predictors are in the model, or when the residual error
is zero.

Formally, suppose that $A_k=[j_1,\ldots j_k]$ is the list of active
variables selected by FS after $k$ steps, and $s_{A_k}=[s_1,\ldots
s_k]$ denotes their signs upon entering.  That is, at each step $k$,   
the variable $j_k$ and sign $s_k$ satisfy 
\begin{align*}
\RSS \big(y,X_{[j_1,\ldots j_{k-1}, j_k]}\big) 
&\leq \RSS \big (y,X_{[j_1,\ldots j_{k-1},j]}\big) 
\;\;\; \text{for all $j\not=j_1,\ldots j_k$, and} \\ 
s_k &= \sign\big( e_k^T (X_{[j_1,\ldots j_k]})^+ y \big), 
\end{align*}
where $\RSS(y,X_S)$ denotes the residual sum of squares from
regressing $y$ onto $X_S$, for a list of variables $S$.  

The set of all observations vectors $y$ that give active
list $A_k$ and sign list $s_{A_k}$ over $k$ steps, denoted
\begin{equation}
\label{eq:Pk_fs}
\cP = \Big\{y : \hat{A}_k(y) = A_k, \, \hat{s}_{A_k}(y) = s_{A_k}
\Big\},
\end{equation}
is indeed a polyhedron of the form $\cP=\{y : \Gamma y \geq 0\}$. 
The proof of this fact uses induction. The case
when $k=1$ can be seen directly by inspection, as $j_1$ and $s_1$ are
the variable and sign to be chosen by FS if and only if  
\begin{align*}
\Big\|\Big(I-X_{j_1}X_{j_1}^T/\|X_{j_1}\|_2^2\Big) y \Big\|_2^2  
&\leq \Big\|\Big(I-X_jX_j^T/\|X_j\|_2^2\Big) y \Big\|_2^2 
\;\;\;\text{for all $j \not=j_1$, and} \\
s_1 &= \sign(X_{j_1}^T y),
\end{align*}
which is equivalent to
\begin{equation*}
s_1 X_{j_1}^T y / \|X_{j_1}\|_2 \geq \pm X_j^T y/\|X_j\|_2  
\;\;\; \text{for all $j\not=j_1$}.
\end{equation*}
Thus the matrix $\Gamma$ begins with $2(p-1)$ rows of the form  
$s_1 X_{j_1}/\|X_{j_1}\|_2 \pm X_j/\|X_j\|_2$, for $j \not= j_1$.
Now assume the statement is true for $k-1$ steps.  At step $k$, the 
optimality conditions for $j_k,s_k$ can be expressed as
\begin{align*}
\Big\|\Big(I-\tilde{X}_{j_k}\tilde{X}_{j_k}^T/
\|\tilde{X}_{j_k}\|_2^2\Big) r \Big\|_2^2  
&\leq \Big\|\Big(I-\tilde{X}_j\tilde{X}_j^T/
\|\tilde{X}_j\|_2^2\Big) r \Big\|_2^2 
\;\;\;\text{for all $j \not=j_1,\ldots j_k$, and} \\
s_k &= \sign(\tilde{X}_{j_k}^T r),
\end{align*}
where \smash{$\tilde{X}_j$} denotes the residual from regressing $X_j$
onto \smash{$X_{A_{k-1}}$}, and $r$ the residual from regressing $y$
onto \smash{$X_{A_{k-1}}$}.  As in the $k=1$ case, the above is
equivalent to 
\begin{equation*}
s_k \tilde{X}_{j_k}^T r / \|\tilde{X}_{j_k}\|_2 
\geq \pm \tilde{X}_j^T r/\|\tilde{X}_j\|_2 
\;\;\; \text{for all $j\not=j_1,\ldots j_k$},
\end{equation*}
or 
\begin{equation*}
s_k X_{j_k}^T P^\perp_{A_{k-1}} y/ 
\|P^\perp_{A_{k-1}} X_{j_k}\|_2 
\geq \pm X_j^T P^\perp_{A_{k-1}} y/ 
\|P^\perp_{A_{k-1}} X_j\|_2 
\;\;\; \text{for all $j\not=j_1,\ldots j_k$},
\end{equation*}
where \smash{$P^\perp_{A_{k-1}}$} denotes the projection orthogonal to the
column space of \smash{$X_{A_{k-1}}$}. Hence we append $2(p-k)$ rows
to $\Gamma$, of the form 
\smash{$ P^\perp_{A_{k-1}} (s_k X_{j_k} /\|P^\perp_{A_{k-1}}
  X_{j_k}\|_2 \pm X_j/\|P^\perp_{A_{k-1}} X_j\|_2)$}, for  
$j \not= j_1,\ldots j_k$. 
In summary, after $k$ steps, the polyhedral set for the FS
selection event \eqref{eq:Pk_fs} corresponds to a matrix $\Gamma$ with
$2pk - k^2 -k$ rows.\footnote{We have been implicitly
assuming thus far that $k < p$.  If $k=p$ (so that necessarily 
$p \leq n$), then we must add an ``extra'' row to $\Gamma$, this row
being \smash{$P^\perp_{A_{p-1}} s_p X_{j_p}$}, which encodes the sign 
constraint \smash{$s_p X_{j_p}^T P^\perp_{A_{p-1}} y \geq 0$}.  For
$k<p$, this constraint is implicitly encoded due to the constraints of
the form \smash{$s_k X_{j_k}^T P^\perp_{A_{k-1}} y \geq \pm a$} for
some $a$.} 

\subsection{Polyhedral sets for LAR selection events}
\label{sec:poly_lar}

The LAR algorithm \citep{lars} is an iterative method, like FS, that
produces a sequence of nested regression models.  As before, we keep a
list of active variables and signs across steps of the algorithm.
Here is a concise description of the LAR steps. At step $k=1$,
we initialize the active variable and sign list with $A=[j_1]$ and 
$s_{A_1}=[s_1]$, where $j_1,s_1$ satisfy  
\begin{equation}
\label{eq:j1s1}
(j_1,s_1) = \argmax_{j =1,\ldots p, \, s \in \{-1,1\}} \, s X_j^T y. 
\end{equation}
(This is the same selection as made by FS at the first step,
provided that $X$ has columns with unit norm.) 
We also record the first knot 
\begin{equation}
\label{eq:lambda1}
\lambda_1 = s_1 X_{j_1}^T y.
\end{equation}
For a general step $k>1$, we form the list $A_k$ by appending $j_k$
to $A_{k-1}$, and form $s_{A_k}$ by appending $s_k$ to 
$s_{A_{k-1}}$, where $j_k,s_k$ satisfy
\begin{equation}
\label{eq:jksk}
(j_k,s_k) = \argmax_{j\notin {A_{k-1}}, \, s \in\{-1,1\}} \, 
\frac{X_j^T P^\perp_{A_{k-1}} y}
{s - X_j^T (X_{A_{k-1}}^+)^T s_{A_{k-1}}} \cdot
1 \left\{\frac{X_j^T P^\perp_{A_{k-1}} y}
{s - X_j^T (X_{A_{k-1}}^+)^T s_{A_{k-1}}} \leq \lambda_{k-1} \right\}. 
\end{equation}
Above, \smash{$P^\perp_{A_{k-1}}$} is the projection orthogonal to the
column space of \smash{$X_{A_{k-1}}$}, $1\{\cdot\}$ denotes the
indicator function, and $\lambda_{k-1}$ is the knot value from step
$k-1$.  We also record the $k$th knot  
\begin{equation}
\label{eq:lambdak}
\lambda_k = \frac{X_{j_k}^T P^\perp_{A_{k-1}} y}
{s_k - X_{j_k}^T (X_{A_{k-1}}^+)^T s_{A_{k-1}}}.
\end{equation}
The algorithm terminates after the $k$-step model if $k=p$, or if 
$\lambda_{k+1} < 0$. 


LAR is often viewed as ``less greedy'' than FS.  It is also
intimately tied to the lasso, as covered in the next subsection.  
Now, we verify that the LAR selection event
\begin{equation}
\label{eq:Pk_lar}
\cP = \Big\{y : \hat{A}_k(y) = A_k, \, \hat{s}_{A_k}(y) = s_{A_k}, \, 
\hat{S}_\ell(y) = S_\ell, \, \ell = 1,\ldots k \Big\}
\end{equation}
is a polyhedron of the form $\cP=\{y : \Gamma y \geq 0\}$.  We can see  
that the LAR event in \eqref{eq:Pk_lar} contains ``extra''
conditioning, \smash{$\hat{S}_\ell(y) = S_\ell$}, $\ell = 1,\ldots k$,
when compared to the FS event in \eqref{eq:Pk_fs}.  
Explained in words, $S_\ell\subseteq \{1,\ldots p\} \times \{-1,1\}$
contains the variable-sign pairs that were ``in competition'' to
become the active variable-sign pair step $\ell$.  A
subtlety of LAR: it is not always the case that $S_\ell=A_{\ell-1}^c
\times \{-1,1\}$, since some variable-sign pairs are automatically
excluded from consideration, as
they would have produced a knot value that is too large (larger than
the previous knot $\lambda_{\ell-1}$).  This is reflected by the
indicator function in \eqref{eq:jksk}.  The
characterization in \eqref{eq:Pk_lar} is still 
{\it perfectly viable for inference}, because any conditional
statement over $\cP$ in \eqref{eq:Pk_lar} translates into a valid one
without conditioning on \smash{$\hat{S}_\ell(y)$}, $\ell=1,\ldots k$,
by marginalizing over all possible realizations 
$S_\ell$, $\ell=1,\ldots k$.  (Recall the discussion of
marginalization in Section \ref{sec:margin}.) 

The polyhedral representation for $\cP$ in \eqref{eq:Pk_lar} again
proceeds by induction. Starting with $k=1$, we can express the
optimality of $j_1,s_1$ in \eqref{eq:j1s1} as
\begin{equation*}
c(j_1,s_1)^T y \geq c(j,s)^T y, \;\;\; \text{for all $j\not=j_1$, $s
  \in \{-1,1\}$},
\end{equation*}
where $c(j,s) = sX_j$.  Thus $\Gamma$ has $2(p-1)$ rows, of the 
form $c(j_1,s_1)-c(j,s)$ for $j\not=j_1$, $s\in\{-1,1\}$. (In the
first step, $S_1=\{1,\ldots p\} \times \{-1,1\}$, and we do not
require extra rows of $\Gamma$ to explicitly represent it.)  Further,
suppose that the selection set can be represented in the desired 
manner, after $k-1$ steps. Then the optimality of
$j_k,s_k$ in \eqref{eq:jksk} can be expressed as
\begin{align*}
c(j_k,s_k,A_{k-1},s_{A_{k-1}})^T y &\geq c(j,s,A_{k-1},s_{A_{k-1}})^T y
\;\;\; \text{for all $(j,s) \in S_k \setminus \{(j_k,s_k)\}$}, \\ 
c(j_k,s_k,A_{k-1},s_{A_{k-1}})^T y &\geq 0,
\end{align*}
where \smash{$c(j,s,A_{k-1},s_{A_{k-1}}) = (P^\perp_{A_{k-1}} X_j)/
(s - X_j^T (X_{A_{k-1}}^+)^T s_{A_{k-1}})$}. The set $S_k$ is
characterized by  
\begin{align*}
c(j,s,A_{k-1},s_{A_{k-1}})^T y &\leq
\lambda_{k-1} \;\;\; \text{for $(j,s) \in S_k$}, \\ 
c(j,s,A_{k-1},s_{A_{k-1}})^T y &\geq
\lambda_{k-1} \;\;\;
\text{for $(j,s) \in \big(A_{k-1}^c \times \{-1,1\}\big) 
\setminus S_k$}.  
\end{align*}
Notice that $\lambda_{k-1}=c(j_{k-1},s_{k-1},A_{k-2},s_{A_{k-2}})^T y$ 
is itself a linear function of $y$, by the inductive hypothesis.    
Therefore, the new $\Gamma$ matrix is created by appending  
the following $|S_k|+2(p-k+1)$ rows to the previous matrix:  
$c(j_k,s_k,A_{k-1},s_{A_{k-1}})- c(j,s,A_{k-1},s_{A_{k-1}})$, 
for $(j,s) \in S_k \setminus \{(j_k,s_k)\}$; 
$c(j_k,s_k,A_{k-1},s_{A_{k-1}})$; 
$c(j_{k-1},s_{k-1},A_{k-2},s_{A_{k-2}})-c(j,s,A_{k-1},s_{A_{k-1}})$,
for $(j,s) \in S_k$; 
$c(j,s,A_{k-1},s_{A_{k-1}})-c(j_{k-1},s_{k-1},A_{k-2},s_{A_{k-2}})$  
for $(j,s) \in (A_{k-1}^c \times \{-1,1\}) \setminus S_k$.  
In total, the number of rows of $\Gamma$ at step $k$ of LAR is   
bounded above by 
\smash{$\sum_{\ell=1}^k (|S_\ell|+2(p-\ell+1)) \leq 3pk-3k^2/2+3k/2$}.   

\subsection{Polyhedral sets for lasso selection events}
\label{sec:poly_lasso}

By introducing a step into the LAR algorithm that deletes variables
from the active set if their coefficients pass through zero, the
modified LAR algorithm traces out the lasso regularization path 
\citep{lars}.  To concisely describe this modification, at a step 
$k>1$, denote by \smash{$(j_k^\mathrm{add},s_k^\mathrm{add})$} the 
variable-sign pair to enter the model next, as defined in
\eqref{eq:jksk}, and denote by \smash{$\lambda_k^\mathrm{add}$} the
value of $\lambda$ at which they would enter, as defined in
\eqref{eq:lambdak}.  Now define 
\begin{equation}
\label{eq:jk_del}
j_k^\mathrm{del} = \argmax_{j \in A_{k-1} \setminus \{j_{k-1}\}} \, 
\frac{e_j^T X_{A_{k-1}}^+ y}
{e_j^T (X_{A_{k-1}}^T X_{A_{k-1}})^{-1} s_{A_{k-1}}} 
\cdot 1 \left\{
\frac{e_j^T X_{A_{k-1}}^+ y}
{e_j^T (X_{A_{k-1}}^T X_{A_{k-1}})^{-1} s_{A_{k-1}}} 
\leq \lambda_{k-1} \right\},
\end{equation}
the variable to leave the model next, and
\begin{equation}
\label{eq:lambdak_del}
\lambda_k^\mathrm{del} = 
\frac{e_{j_k^\mathrm{del}}^T X_{A_{k-1}}^+ y} 
{e_{j_k^\mathrm{del}}^T (X_{A_{k-1}}^T X_{A_{k-1}})^{-1} s_{A_{k-1}}},
\end{equation}
the value of $\lambda$ at which it would leave.  The lasso
regularization path is given by executing whichever action---variable
entry, or variable deletion---happens first, when seen from the
perspective of decreasing $\lambda$.  That is, we record the $k$th
knot \smash{$\lambda_k =
\max\{\lambda_k^\mathrm{add},\lambda_k^\mathrm{del}\}$}, and we form 
\smash{$A_k,s_{A_k}$} by either adding
\smash{$j_k^\mathrm{add},s_k^\mathrm{add}$} to
\smash{$A_{k-1},s_{A_{k-1}}$} if 
\smash{$\lambda_k = \lambda_k^\mathrm{add}$}, or by deleting 
\smash{$j_k^\mathrm{del}$} from $A_{k-1}$ and its sign from 
\smash{$s_{A_{k-1}}$} if 
\smash{$\lambda_k = \lambda_k^\mathrm{del}$}.

We show that the lasso selection event\footnote{The observant reader
  might notice that the selection event 
  for the lasso in \eqref{eq:Pk_lasso}, compared to that for FS in
  \eqref{eq:Pk_fs} and LAR in \eqref{eq:Pk_lar}, actually enumerates
  the assignments of active sets \smash{$\hat{A}_\ell(y)=A_\ell$}, 
  $\ell=1,\ldots k$ across all $k$ steps of the path.  This is done
  because, with variable deletions, it is no longer possible to
  express an entire history of active sets with a single list.  The
  same is true of the active signs.}
\begin{equation}
\label{eq:Pk_lasso}
\cP = \Big\{y : \hat{A}_\ell(y) = A_\ell, \, 
\hat{s}_{A_\ell}(y) = s_{A_\ell}, \,  
\hat{S}_\ell^\mathrm{add}(y) = S_\ell^\mathrm{add}, \, 
\hat{S}_\ell^\mathrm{del}(y) = S_\ell^\mathrm{del}, \,
\ell = 1,\ldots k \Big\}.
\end{equation}
can be expressed in polyhedral form 
$\{ y: \Gamma y \geq  0\}$.  A difference between \eqref{eq:Pk_lasso}
and the LAR event in \eqref{eq:Pk_lar} is that, in addition to keeping
track of the set \smash{$S^\mathrm{add}_\ell$} of variable-sign
pairs in consideration to become active (to be added) at step $\ell$,
we must also keep track of the set \smash{$S^\mathrm{del}_\ell$} of 
variables in consideration to become inactive (to be deleted) at step 
$\ell$.  As discussed earlier, a valid inferential statement
conditional on the lasso event $\cP$ in \eqref{eq:Pk_lasso} is still
valid once we ignore the conditioning on
\smash{$\hat{S}_\ell^\mathrm{add}(y)$},
\smash{$\hat{S}_\ell^\mathrm{del}(y)$},
$\ell=1,\ldots k$, by marginalization. 

To build the $\Gamma$ matrix corresponding to \eqref{eq:Pk_lasso}, we
begin the same  
construction as we laid out for LAR in the last subsection,
and simply add more rows.  At a step $k>1$, the rows we
described appending to $\Gamma$ for LAR now
merely characterize the variable-sign pair 
\smash{$(j_k^\mathrm{add},s_k^\mathrm{add})$} to enter the model
next, as well as the set \smash{$S_k^\mathrm{add}$}. To characterize
the variable \smash{$j_k^\mathrm{del}$} to leave the model next, we
express its optimality in \eqref{eq:jk_del} as 
\begin{align*}
d(j_k^\mathrm{del},A_{k-1},s_{A_{k-1}})^T y &\geq
d(j,A_{k-1},s_{A_{k-1}})^T y 
\;\;\; \text{for all $j \in 
S_k^\mathrm{del} \setminus \{j_k^\mathrm{del}\}$}, \\  
d(j_k^\mathrm{del},A_{k-1},s_{A_{k-1}})^T y &\geq 0, 
\end{align*}
where \smash{$d(j,A_{k-1},s_{A_{k-1}}) = ((X_{A_{k-1}}^+)^T e_j) /
(e_j^T (X_{A_{k-1}}^T X_{A_{k-1}})^{-1} s_{A_{k-1}})$}, and 
$S_k^\mathrm{del}$ is characterized by   
\begin{align*}
d(j,s,A_{k-1},s_{A_{k-1}})^T y &\leq
\lambda_{k-1} \;\;\; \text{for $(j,s) \in S_k^\mathrm{del}$}, \\ 
d(j,s,A_{k-1},s_{A_{k-1}})^T y &\geq
\lambda_{k-1} \;\;\;
\text{for $(j,s) \in A_{k-1} \setminus S_k^\mathrm{del}$}. 
\end{align*} 
Recall that $\lambda_{k-1}=b_{k-1}^T y$ is a linear function
of $y$, by the 
inductive hypothesis. If a variable was added at step $k-1$, then 
$b_{k-1}=c(j_{k-1},s_{k-1},A_{k-2},s_{A_{k-2}})$; if instead
a variable was deleted at step $k-1$, then
$b_{k-1}=d(j_{k-1},A_{k-2},s_{A_{k-2}})$.  Lastly, we must
characterize step $k$ as either witnessing a variable addition or
deletion.  The former case is represented by
\begin{equation*}
c(j_k^\mathrm{add},s_k^\mathrm{add},A_{k-1},s_{A_{k-1}})^T \geq 
d(j_k^\mathrm{del},A_{k-1},s_{A_{k-1}})^T y,
\end{equation*}
the latter case reverses the above inequality. Hence, in
addition to those described in the previous subsection, we append the  
following   
\smash{$|S_k^\mathrm{del}|+|A_{k-1}|+1$} rows to $\Gamma$: 
\smash{$d(j_k^\mathrm{del},A_{k-1},s_{A_{k-1}}) -
  d(j,A_{k-1},s_{A_{k-1}})$} for 
\smash{$(j,s)\in S_k^\mathrm{del} \setminus \{j_k^\mathrm{del}\}$}; 
\smash{$d(j_k^\mathrm{del},A_{k-1},s_{A_{k-1}})$}; 
$b_{k-1}-d(j,A_{k-1},s_{A_{k-1}})$
for \smash{$(j,s) \in S_k^\mathrm{del}$};
$d(j,A_{k-1},s_{A_{k-1}})- b_{k-1}$ 
for \smash{$(j,s) \in A_{k-1} \setminus S_k^\mathrm{del}$}; and 
either
\smash{$c(j_k^\mathrm{add},s_k^\mathrm{add},A_{k-1},s_{A_{k-1}}) - 
d(j_k^\mathrm{del},A_{k-1},s_{A_{k-1}})$}, or the negative of this
quantity, depending on whether a variable was added or deleted at step
$k$. Altogether, the number of rows of $\Gamma$ at step $k$ 
is at most \smash{$\sum_{\ell=1}^k
  (|S_\ell^\mathrm{add}|+|S_\ell^\mathrm{del}|+2|A_{\ell-1}^c|+|A_{\ell-1}|+1)
  \leq 3pk+k$}.

\subsection{Details of the exact tests and intervals}
\label{sec:detail_test}

Given a number of steps $k$, after we have formed the appropriate  
$\Gamma$ matrix for the FS, LAR, or lasso procedures, as derived in
the last three subsections, computing conditional $p$-values and
intervals is straightforward. 
Consider testing a generic null hypothesis $H_0 : v^T \theta = 0$
where $v$ is arbitrary. First we compute, as prescribed by Lemma  
\ref{lem:truncate}, the quantities 
\begin{align*}
\V^\mathrm{lo} &= \max_{j : (\Gamma v)_j > 0} \, 
-(\Gamma y)_j \cdot \|v\|_2^2/(\Gamma v)_j + v^T y, \\   
\V^\mathrm{up} &= \min_{j : (\Gamma v)_j < 0} \, 
-(\Gamma y)_j \cdot \|v\|_2^2/(\Gamma v)_j + v^T y.
\end{align*}
Note that the number of operations needed to compute 
\smash{$\V^\mathrm{lo},\V^\mathrm{up}$} is $O(mn)$, where $m$ is the 
number of rows of $\Gamma$.  For testing against a one-sided
alternative $H_1 : v^T \theta > 0$, we form the test statistic  
\begin{equation*}
T_k = 1 - F_{0, \sigma^2 \|v\|_2^2} 
^{[\V^\mathrm{lo}, \V^\mathrm{up}]} (v^T y)
= \frac{\Phi\left(\frac{\V^\mathrm{up}}{\sigma \|v\|_2}\right) - 
\Phi\left(\frac{v^T y}{\sigma\|v\|_2}\right)}
{\Phi\left(\frac{\V^\mathrm{up}}{\sigma \|v\|_2}\right) - 
\Phi\left(\frac{\V^\mathrm{lo}}{\sigma \|v\|_2}\right)}.
\end{equation*}
By Lemma \ref{lem:condinf1}, this serves as valid p-value,
conditional on the selection. That is,   
\begin{equation}
\label{eq:Tvnull}
\P_{v^T \theta =0} \Big(T_k \leq \alpha \,\Big|\, 
\hat{A}_k(y) = A_k, \, \hat{s}_{A_k}(y) = s_{A_k} \Big) = \alpha, 
\end{equation}
for any $0 \leq \alpha \leq 1$.  Also by Lemma
\ref{lem:condinf1}, a conditional confidence interval is derived
by first computing $\delta_\alpha$ that satisfies
\begin{equation*}
1-F_{\delta_\alpha, \sigma^2 \|v\|_2^2}
^{[\V^\mathrm{lo}, \V^\mathrm{up}]} (v^T y) = \alpha.
\end{equation*}
Then we let $I_k = [\delta_\alpha,\infty)$, which has the 
proper conditional coverage, in that
\begin{equation}
\label{eq:Tvcover}
\P \Big( v^T \theta \in I_k \,\Big|\,
\hat{A}_k(y) = A_k, \, \hat{s}_{A_k}(y) = s_{A_k} \Big) = 1-\alpha.
\end{equation}
For testing against a two-sided alternative 
$H_1 : v^T \theta \not= 0$, we instead use the test statistic  
\begin{equation*}
T_k' = 2 \cdot \min\{T_k,1-T_k\},
\end{equation*}
and by Lemma \ref{lem:condinf2}, the same results as in 
\eqref{eq:Tvnull}, \eqref{eq:Tvcover} follow, but with $T_k'$ in
place of $T_k$, and  $I_k'=[\delta_{\alpha/2}, \delta_{1-\alpha/2}]$
in place of $I_k$. 

Recall that the case when \smash{$v= (X_{A_k}^+)^T e_k$}, and the
null hypothesis is $H_0 : e_k^T X_{A_k}^+ \theta = 0$, is of
particular interest, as discussed in Section \ref{sec:summary}.  Here,
we are testing whether the coefficient of the last selected
variable, in the population regression of $\theta$ on $X_{A_k}$, is
equal to zero.   
For this problem, the details of the p-values and intervals follow 
exactly as above with the appropriate substitution for $v$.  However,
as we examine next, the one-sided variant of the test must be handled 
with care, in order for the alternative to make sense. 


\subsection{One-sided or two-sided tests?}
\label{sec:oneor}

Consider testing the partial regression coefficient of the variable to 
enter, at step $k$ of FS, LAR, or lasso, in a projected linear model
of $\theta$ on \smash{$X_{A_k}$}. With the choice  
\smash{$v= (X_{A_k}^+)^T e_k$}, the one-sided setup  
$H_0 : v^T \theta=0$ versus $H_1 : v^T \theta > 0$ is not inherently 
meaningful, since there is no reason to believe ahead of time that
the $k$th population regression coefficient 
\smash{$e_k^T X_{A_k}^+\theta$} should be positive. 
By defining \smash{$v= s_k (X_{A_k}^+)^T e_k$}, where recall $s_k$ is 
the sign of the $k$th variable as it enters the (FS, LAR, or lasso)
model, the null \smash{$H_0 : s_k e_k^T X_{A_k}^+\theta=0$}
is unchanged, but the one-sided alternative 
\smash{$H_1 : s_k e_k^T X_{A_k}^+\theta > 0$} now has a concrete
interpretation: it says that the population regression
coefficient of the last selected variable is nonzero, and {\it has
  the same sign} as the coefficient in the fitted (sample) model. 

Clearly, the one-sided test here will have stronger power than its
two-sided version when the described one-sided alternative is
true.  It will lack power when the appropriate population regression
coefficient is nonzero, and has the opposite sign as the coefficient in
the sample model.  However, this is not really of concern, because the 
latter alternative seems unlikely to be encountered in practice,
unless the size of the population effect is very small (in which case 
the two-sided test would not likely reject, as well).  For
these reasons, we often prefer the one-sided test, 
with \smash{$v= s_k (X_{A_k}^+)^T e_k$}, for pure 
significance testing of the variable to enter at the $k$th step.  The
p-values in Table \ref{tab:prostate}, e.g., were computed accordingly.  

With confidence intervals, the story is different.
Informally, we find one-sided (i.e., half-open) intervals, which
result from a one-sided significance test, to be less desirable from
the perspective of a practitioner.
Hence, for coverage statements, we often prefer the two-sided
version of our test, which leads to two-sided conditional confidence
intervals (selection intervals).  The intervals in Figure
\ref{fig:prostate}, for example, were computed in this way.

\subsection{Models with intercept}

Often, we run FS, LAR, or lasso by first beginning with an intercept 
term in the model, and then adding predictors.  Our
selection theory can accommodate this case.  It is easiest to simply 
consider centering $y$ and the columns of $X$, which is equivalent to
including an intercept term in the regression.  After centering, the  
covariance matrix of $y$ is 
$\Sigma=\sigma^2(I-\mathds{1}\mathds{1}^T/n)$, 
where $\mathds{1}$ is the vector of all $1$s.  This is fine, because
the polyhedral theory from Section \ref{sec:poly} applies to Gaussian
random variables with an arbitrary (but known) covariance.  With the
centered $y$ and $X$, the construction of the polyhedral set 
($\Gamma$ matrix) carries over just as described in Sections
\ref{sec:poly_fs}, \ref{sec:poly_lar}, or \ref{sec:poly_lasso}.
The conditional tests and intervals also carry
over as in Section \ref{sec:detail_test}, except with the general
contrast vector $v$ replaced by its own centered version.  Note that 
when $v$ lies in the column space of $X$, e.g., when 
\smash{$v= (X_{A_k}^+)^T e_k$}, no changes at all are needed.  

\subsection{How much to condition on?}

In Sections \ref{sec:poly_fs}, \ref{sec:poly_lar}, and
\ref{sec:poly_lasso}, we saw in the construction of the polyhedral
sets in \eqref{eq:Pk_fs}, \eqref{eq:Pk_lar}, \eqref{eq:Pk_lasso} that
it was convenient to condition on different quantities in order to
define the FS, LAR, and lasso selection events, respectively.  All
three of the 
polyhedra in \eqref{eq:Pk_fs}, \eqref{eq:Pk_lar}, \eqref{eq:Pk_lasso}
condition on the active signs \smash{$s_{A_k}$} of the selected model,
and the latter two condition on more (loosely, the set of variables
that were eligible to enter or leave the active model at each step).
The decisions here, about what to
condition on, were driven entirely by computational convenience.  
It is important to note that---even though any amount of extra
conditioning will still lead to valid inference once we marginalize 
out part of the conditioning set (recall Section \ref{sec:margin})---a 
greater degree of conditioning will generally lead to less powerful
tests and wider intervals.  This not only refers to the extra
conditioning in the LAR and lasso selection events, but also to the
specification of active signs \smash{$s_{A_k}$} common to all three
events.  At the price of increased computation, one
can eliminate unnecessary conditioning by considering a union of 
polyhedra (rather than a single one) as determining a selection
event.  This is done in \citet{exactlasso} and \citet{exactmeans}.
In FS regression, one can condition only on the sufficient statistics
for the nuisance parameters, and obtain the most powerful selective
test. Details are in \cite{optimalinf, fithian2015}.

\section{The spacing test for LAR}
\label{sec:spacing}

A computational challenge faced by the FS, LAR, and lasso tests
described in the last section is that the matrices
$\Gamma$ computed for the polyhedral representations 
$\{y : \Gamma y \geq 0\}$ of their 
selection events can grow very large; in the FS case, the matrix
$\Gamma$ will have $2pk$ after $k$ steps, and for LAR and lasso, it
will have roughly $3pk$ rows. This makes it cumbersome to form 
$\V^\mathrm{lo},\V^\mathrm{up}$,
as the computational cost for these quantities scales linearly with 
the number of rows of $\Gamma$.  In this section, we derive a simple
approximation to the polyhedral representations for the LAR 
events, which remedies this computational issue.

\subsection{A refined characterization of the polyhedral set}
\label{sec:refined}

We begin with an alternative characterization for the LAR selection
event, after $k$ steps. The proof draws heavily on results from
\citet{LTTT2013}, and is given in Appendix \ref{app:refined}.  

\begin{lemma}
\label{lem:refined}
Suppose that the LAR algorithm produces the list of active variables
$A_k$ and signs $s_{A_k}$ after $k$ steps.
Define \smash{$c(j,s,A_{k-1},s_{A_{k-1}}) =  
(P^\perp_{A_{k-1}} X_j)/ (s - X_j^T (X_{A_{k-1}}^+)^T
s_{A_{k-1}})$}, with the convention $A_0=s_{A_0}=\emptyset$,
so that \smash{$c(j,s,A_0,s_{A_0})=c(j,s)=sX_j$}.  Consider the
following conditions:
\begin{align}
\label{eq:ochar}
c(j_1,s_1,A_0,s_{A_0})^T y &\geq c(j_2,s_2, A_1,s_{A_1})^T y \geq
\ldots \geq c(j_k,s_k, A_{k-1},s_{A_{k-1}})^T y \geq 0, \\
\label{eq:pchar}
 c(j_k,s_k, A_{k-1},s_{A_{k-1}})^T y &\geq M^+_k
\Big(j_k,s_k, c(j_{k-1},s_{k-1}, A_{k-2},s_{A_{k-2}})^T y\Big), \\
\label{eq:nchar}
 c(j_\ell,s_\ell, A_{\ell-1},s_{A_{\ell-1}})^T y &\leq M^-_\ell
\Big(j_\ell,s_\ell, c(j_{\ell-1},s_{\ell-1}, A_{\ell-2},s_{A_{\ell-2}})^T y\Big), \;\;\;
\text{for $\ell=1,\ldots k$}, \\
\label{eq:zchar}
 0 &\geq M^0_\ell
\Big(j_\ell,s_\ell, c(j_{\ell-1},s_{\ell-1}, A_{\ell-2},s_{A_{\ell-2}})^T y\Big), \;\;\;
\text{for $\ell=1,\ldots k$}, \\
\label{eq:schar}
0 &\leq M^S_\ell \, y, \;\;\;\text{for $\ell=1,\ldots k$}.
\end{align}
(Note that for $\ell=1$ in \eqref{eq:nchar}, 
\eqref{eq:zchar}, we are meant to interpret
$c(j_0,s_0,A_{-1},s_{A_{-1}})^T y = \infty$.) The set of all $y$ 
satisfying the above conditions is 
the same as the set $\cP$ in \eqref{eq:Pk_lar}.

Moreover, the quantity $M^+_k$ in \eqref{eq:pchar} can be written as 
a maximum of linear functions of $y$, each $M^-_\ell$ in
\eqref{eq:nchar} can be written as a minimum of linear functions of
$y$, each $M^0_\ell$ in \eqref{eq:zchar} can be written as a
maximum of linear functions of $y$, and each $M^S_\ell$ in
\eqref{eq:schar} is a matrix.  Hence 
\eqref{eq:ochar}--\eqref{eq:schar} can be expressed as $\Gamma y
\geq 0$ for a matrix $\Gamma$. The number of rows of
$\Gamma$ is bounded above by $4pk-2k^2-k$. 
\end{lemma}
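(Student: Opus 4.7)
The plan is to establish the set equality by induction on $k$, reorganizing the LAR polyhedral construction of Section \ref{sec:poly_lar} into the refined form \eqref{eq:ochar}--\eqref{eq:schar}. The key observation that grounds everything is the identity $\lambda_\ell = c(j_\ell, s_\ell, A_{\ell-1}, s_{A_{\ell-1}})^T y$ from \eqref{eq:lambdak}, so \eqref{eq:ochar} is literally the LAR knot monotonicity statement $\lambda_1 \geq \lambda_2 \geq \cdots \geq \lambda_k \geq 0$.

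For the base case $k=1$, unfolding the $\argmax$ in \eqref{eq:j1s1} gives pairwise comparisons $s_1 X_{j_1}^T y \geq s X_j^T y$ for each $(j,s)\neq (j_1,s_1)$, together with $\lambda_1 \geq 0$. These sort naturally into \eqref{eq:ochar} plus the $M^-_1$, $M^0_1$, $M^S_1$ terms, with $M^+_1$ and the $\lambda_{-1}$-dependent entries handled trivially by the convention $c(j_0,s_0,\ldots)^T y = \infty$. For the inductive step, I would take the three families of step-$k$ constraints from Section \ref{sec:poly_lar}: (a) the winner-vs-eligible-competitor comparisons inside $S_k$; (b) the nonnegativity $c(j_k,s_k,\ldots)^T y \geq 0$; and (c) the membership characterization of $S_k$ as those pairs whose linear functional value lies at or below $\lambda_{k-1}$. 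I would then regroup these as follows: family (a) collapses into $\lambda_k \geq M^+_k$, where $M^+_k$ is a maximum of linear functions of $y$ ranging over eligible competitors; the upper-bound half of (c) collapses into $\lambda_\ell \leq M^-_\ell$, a minimum of linear functions; the ineligibility (lower-bound) half of (c) becomes $0 \geq M^0_\ell$ for a max-of-linear $M^0_\ell$; and the remaining sign information fills in $M^S_\ell$. The fact that each previous knot $\lambda_{\ell-1}$ is itself a linear functional of $y$, by the inductive hypothesis, is what keeps the regrouped system polyhedral rather than piecewise-linear.

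For the row count, at step $\ell$ there are at most $2(p-\ell+1)$ competitor pairs $(j,s)$ with $j \notin A_{\ell-1}$ and $s\in\{-1,1\}$. Each such pair contributes at most one row to the ``winner-vs-competitor'' family and one to the eligibility family of (c), for roughly $4(p-\ell+1)$ rows per step, plus an $O(1)$ contribution for signs and ordering. Summing from $\ell=1$ to $k$ and simplifying gives the advertised bound $4pk - 2k^2 - k$.

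The hardest part is the bookkeeping that verifies the regrouped system recovers $\cP$ \emph{exactly}, with no missing or spurious constraints, and that the quantities $M^+_k$, $M^-_\ell$, $M^0_\ell$ all admit the claimed max/min-of-linear structure once the eligibility indicator in \eqref{eq:jksk} is folded in. Specifically, one must check that the ``clipping'' encoded by the $S_k$-membership conditions is faithfully separated into the upper bound \eqref{eq:nchar} on $\lambda_\ell$ and the lower bound \eqref{eq:zchar} on ineligible competitors, so that \eqref{eq:pchar} can be stated cleanly without re-introducing the indicator. The $c(j_0,s_0,\ldots)^T y = \infty$ convention at $\ell=1$ is the kind of detail that has to be handled correctly for the induction to close.
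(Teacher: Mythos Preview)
Your proposal has a genuine structural gap: the decomposition you describe does not match the objects $M_\ell^+, M_\ell^-, M_\ell^0, M_\ell^S$ as they actually arise, and in particular you are missing the key step that allows \eqref{eq:pchar} to appear only at $\ell=k$.

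First, the split into $M^+,M^-,M^0$ does not come from the eligibility constraints (your family (c)) at all. In the paper's argument (which quotes Lemma~7 of \citet{LTTT2013}), one applies a truncation-lemma reformulation to the winner-versus-competitor constraints (your family (a)) with $v=c(j_\ell,s_\ell,A_{\ell-1},s_{A_{\ell-1}})$. Writing $\Sigma_{j,j'}=c(j,s,\ldots)^T c(j',s',\ldots)$, each competitor constraint is rearranged into a bound on $\lambda_\ell=v^T y$, and the \emph{sign of $1-\Sigma_{j_\ell,j'}/\Sigma_{j_\ell,j_\ell}$} determines whether that competitor contributes to the lower bound $M_\ell^+$, the upper bound $M_\ell^-$, or the zero-coefficient side condition $M_\ell^0$. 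The eligibility constraints---both halves of your family (c)---are what populate $M_\ell^S$. Your mapping, which sends the ``upper-bound half of (c)'' to $M_\ell^-$ and the ``ineligibility half'' to $M_\ell^0$, is therefore incorrect: (c1) constrains $\lambda_{\ell-1}$ from below, not $\lambda_\ell$ from above, and neither half has the max/min-of-affine-in-$\lambda_\ell$ structure that the lemma requires.

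Second, and more seriously, you never explain why only $M_k^+$ survives in \eqref{eq:pchar}. After the reformulation above, each step $\ell$ produces a lower bound $\lambda_\ell\ge M_\ell^+$. The paper drops these for $\ell<k$ by invoking Lemma~9 of \citet{LTTT2013}, which proves the nontrivial inequality
\[
M_\ell^+\big(j_\ell,s_\ell,\lambda_{\ell-1}\big)\;\le\;c(j_{\ell+1},s_{\ell+1},A_\ell,s_{A_\ell})^T y=\lambda_{\ell+1},
\]
so that $\lambda_\ell\ge M_\ell^+$ is redundant given $\lambda_\ell\ge\lambda_{\ell+1}$ from \eqref{eq:ochar}. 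Without this lemma (or an equivalent argument), your induction cannot close: you would be left with $k$ separate $M_\ell^+$ constraints rather than the single one at $\ell=k$, and the refined characterization in the statement would not follow.
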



At first glance, Lemma \ref{lem:refined} seems to have done little
for us over the 
polyhedral characterization in Section \ref{sec:poly_lar}: after $k$
steps, we are now faced with a $\Gamma$ matrix that has on the order
of $4pk$ rows (even more than before!).  Meanwhile, at the risk of
stating the obvious, the characterization in Lemma \ref{lem:refined}
is far more succinct (i.e., the $\Gamma$ matrix is much smaller)  
without the conditions in \eqref{eq:nchar}--\eqref{eq:schar}.
Indeed, in certain special cases (e.g., orthogonal predictors)
these conditions are vacuous, and so they do not contribute to the 
formation of $\Gamma$. Even outside of such cases,
we have found that dropping the conditions
\eqref{eq:nchar}--\eqref{eq:schar} yields an accurate (and
computationally efficient) approximation of the LAR selection set in
practice. This is discussed next.  

\subsection{A simple approximation of the polyhedral set} 
\label{sec:approx}

It is not hard to see from their definitions in Appendix
\ref{app:refined} that when $X$ is orthogonal (i.e., when $X^T X =
I$), we have $M^-_\ell=\infty$ and $M^0_\ell=-\infty$, 
and furthermore, the matrix $M_\ell^S$ has zero rows, for each 
$\ell$. This means that the conditions
\eqref{eq:nchar}--\eqref{eq:schar} are vacuous.  The 
polyhedral characterization in Lemma \ref{lem:refined}, therefore,
reduces to $\{y : \Gamma y \geq U\}$, where $\Gamma$ has only  
$k+1$ rows, defined by the $k+1$ constraints \eqref{eq:ochar},
\eqref{eq:pchar}, and $U$ is a random vector with components 
$U_1=\ldots = U_k=0$, and
\smash{$U_{k+1}=M^+_k(j_k,s_k, c(j_{k-1},s_{k-1},
  A_{k-2},s_{A_{k-2}})^T y)$}.   

For a general (nonorthogonal) $X$, we might still consider ignoring
the conditions \eqref{eq:nchar}--\eqref{eq:schar} and using the
compact representation $\{y : \Gamma y \geq U\}$ induced by
\eqref{eq:ochar}, \eqref{eq:pchar}. This is an
approximation to the exact polyhedral characterization in Lemma
\ref{lem:refined}, but it is a computationally favorable one, since
$\Gamma$ has only $k+1$ rows (compared to about $4pk$ rows per the
construction of the lemma).   Roughly speaking, the constraints in
\eqref{eq:nchar}--\eqref{eq:schar} are often inactive (loose) among
the full collection \eqref{eq:ochar}--\eqref{eq:schar}, so
dropping them does not change the geometry of the set.  Though 
we do not pursue formal arguments to this end (beyond the orthogonal
case), empirical evidence suggests that this approximation is often
justified. 

Thus let us suppose for the moment that we are interested in the
polyhedron $\{y : \Gamma y \geq U\}$ with $\Gamma, U$ as
defined above, either serving an exact representation, or an
approximate one, reducing the full description in Lemma
\ref{lem:refined}.   
Our focus is the application of our polyhedral inference 
tools from Section \ref{sec:poly} to $\{y : \Gamma y \geq U\}$.
Recall that the established polyhedral theory considers sets of
the form $\{y : \Gamma y \geq u\}$, where $u$ is fixed. As the 
equivalence in \eqref{eq:truncate} is a deterministic rather than a 
distributional result, it holds whether $U$ is random or
fixed.  But the independence of the constructed 
$\V^\mathrm{lo},\V^\mathrm{up},\V^0$ and $v^T y$ is not as immediate.
The quantities $\V^\mathrm{lo},\V^\mathrm{up},\V^0$ are now functions
of $y$ and $U$, both of which are random.  A important special case
occurs when $v^T y$ and the pair $((I - \Sigma v v^T / v^T \Sigma v)
y, U)$ are independent.
In this case $\V^\mathrm{lo},\V^\mathrm{up},\V^0$---which only depend
on the latter pair above---are clearly independent of $v^T y$. To be
explicit, we state this result as a corollary.   

\begin{corollary}
[\textbf{Polyhedral selection as truncation, random $U$}] 
\label{cor:truncate}
For any fixed $y, \Gamma, U, v$ with $v^T \Sigma v \not= 0$,  
\begin{equation*}
\Gamma y \geq U \iff
\V^\mathrm{lo}(y,U) \leq v^T y \leq \V^\mathrm{up}(y, U), \, 
\V^0(y,U) \leq 0,
\end{equation*}
where 
\begin{align*}
\V^\mathrm{lo}(y,U) &= \max_{j: \rho_j > 0} \,
\frac{U_j  - (\Gamma y)_j + \rho_jv^T y}{\rho_j}, \\ 
\V^\mathrm{up}(y,U) &= \min_{j: \rho_j < 0} \,
\frac{ U_j - (\Gamma y)_j + \rho_jv^T y}{\rho_j}, \\
\V^0(y,U) &= \max_{j: \rho_j = 0} \,
 U_j - (\Gamma y)_j,
\end{align*}
and $\rho = \Gamma\Sigma v / v^T\Sigma v$. 
Moreover, assume that $y$ and $U$ are random, and that 
\begin{equation}
\label{eq:ucond}
U \;\,\text{is a function of}\;\, (I-\Sigma v v^T / v^T \Sigma v) y,
\end{equation}
so $v^T y$ and the pair $((I - \Sigma v v^T / v^T \Sigma v)
y, U)$ are independent.
Then the triplet $(\V^\mathrm{lo},\V^\mathrm{up},\V^0)(y,U)$ is
independent of $v^T y$.      
\end{corollary}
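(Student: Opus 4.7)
My plan is to break the corollary into its two separate claims and handle them in sequence, reusing as much of Lemma~\ref{lem:truncate} as possible.

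The deterministic equivalence is essentially free. The derivation of \eqref{eq:truncate}--\eqref{eq:v_zero} in Lemma~\ref{lem:truncate} proceeds row by row: for each $j$, one rewrites the scalar inequality $(\Gamma y)_j \geq u_j$ by isolating the component of $y$ along $v$, using $\rho_j = (\Gamma \Sigma v)_j / v^T\Sigma v$, and then splitting into the three cases $\rho_j > 0$, $\rho_j < 0$, and $\rho_j = 0$. Nothing in this manipulation uses that the right-hand side is constant. So I would just observe that the identical algebra, applied with $u$ replaced by $U$, yields the claimed formulas for $\V^{\mathrm{lo}}(y,U), \V^{\mathrm{up}}(y,U), \V^0(y,U)$; the first part of the Remark after Lemma~\ref{lem:truncate} already anticipates this.

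For the independence claim, the key is to show that $(\V^{\mathrm{lo}},\V^{\mathrm{up}},\V^0)(y,U)$, despite their visible $v^T y$ dependence, are actually functions of $(I - \Sigma v v^T/v^T\Sigma v)y$ and $U$ alone. The plan is to decompose
\[
(\Gamma y)_j = \bigl(\Gamma (I - \Sigma v v^T / v^T\Sigma v) y\bigr)_j + \rho_j \cdot v^T y,
\]
using $\Gamma \Sigma v / v^T\Sigma v = \rho$. Substituting into the numerator $U_j - (\Gamma y)_j + \rho_j v^T y$ of, say, $\V^{\mathrm{lo}}$, the two $\rho_j v^T y$ terms cancel, leaving $U_j - (\Gamma (I - \Sigma v v^T/v^T\Sigma v)y)_j$. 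The same cancellation works in the formula for $\V^{\mathrm{up}}$; for $\V^0$ the $\rho_j = 0$ case makes the $v^T y$ contribution vanish directly. Hence each of $\V^{\mathrm{lo}}, \V^{\mathrm{up}}, \V^0$ is a function only of $(I - \Sigma v v^T/v^T\Sigma v)y$ and $U$.

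Now invoking assumption \eqref{eq:ucond} collapses this further: the triplet becomes a function of $(I - \Sigma v v^T/v^T\Sigma v)y$ alone. Under $y \sim N(\theta,\Sigma)$, the random vector $(I - \Sigma v v^T/v^T\Sigma v)y$ and the scalar $v^T y$ are jointly Gaussian with cross-covariance $(I - \Sigma v v^T/v^T\Sigma v)\Sigma v = \Sigma v - \Sigma v = 0$, so they are independent; composing with a measurable function preserves this, yielding the stated independence of $(\V^{\mathrm{lo}},\V^{\mathrm{up}},\V^0)(y,U)$ and $v^T y$. The only step requiring any care is the algebraic cancellation in the numerators; everything else follows from Lemma~\ref{lem:truncate} and a standard Gaussian-independence argument, so I do not anticipate a real obstacle here.
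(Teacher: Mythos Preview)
Your proposal is correct and matches the paper's own justification, which appears inline just before the corollary: the deterministic equivalence is inherited verbatim from Lemma~\ref{lem:truncate} (as noted in the Remark following it), and the independence follows because $\V^{\mathrm{lo}},\V^{\mathrm{up}},\V^0$ depend only on the pair $\bigl((I-\Sigma v v^T/v^T\Sigma v)y,\,U\bigr)$, which under \eqref{eq:ucond} is independent of $v^Ty$. You have simply made explicit the algebraic cancellation and the zero-cross-covariance computation that the paper leaves implicit.
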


Under the condition \eqref{eq:ucond} on $U$, the rest of
the inferential treatment proceeds as before, as Corollary
\ref{cor:truncate} ensures that we have the required alternate
truncated Gaussian representation of $\Gamma y \geq U$, with the
random truncation limits $\V^\mathrm{lo},\V^\mathrm{up}$ being
independent of the univariate Gaussian $v^T y$. 
In our LAR problem setup, $U$ is a given random variate (as
described in the first paragraph of this subsection).  The relevant
question is of course: when does \eqref{eq:ucond}
hold? Fortunately, this condition holds with only very minor 
assumptions on $v$: this vector must lie
in the column space of the LAR active variables at the current step.

\begin{lemma}
\label{lem:indep}
Suppose that we have run $k$ steps of LAR, and represent the
conditions \eqref{eq:ochar}, \eqref{eq:pchar} in Lemma
\ref{lem:refined} as $\Gamma y \geq U$.
Under our  
running regression model $y \sim N(\theta,\sigma^2 I)$, if
$v$ is in the column space of the active variables $A_k$, written
\smash{$v \in \col(X_{A_k})$}, then the     
condition in \eqref{eq:ucond} holds, so inference
for $v^T \theta$ can be carried out with the same set of tools as
developed in Section \ref{sec:poly}, conditional on 
$\Gamma y \geq U$.   
\end{lemma}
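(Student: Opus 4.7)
The plan is to verify \eqref{eq:ucond} directly, i.e.\ to show that $U$ depends on $y$ only through $P^\perp_{A_k} y$. This suffices because $v \in \col(X_{A_k})$ implies $\col(X_{A_k})^\perp \subseteq v^\perp$, so any function of $P^\perp_{A_k} y$ is automatically a function of $(I - vv^T/\|v\|_2^2)\,y$ (which coincides with $(I - \Sigma vv^T/v^T\Sigma v)\,y$ under $\Sigma = \sigma^2 I$). The $k$ rows of $\Gamma$ coming from the chain inequalities \eqref{eq:ochar} contribute $U_1 = \cdots = U_k = 0$, which are constants and trivially satisfy \eqref{eq:ucond}; the entire argument therefore reduces to the single remaining entry $U_{k+1} = M_k^+(j_k, s_k, \lambda_{k-1})$, which comes from \eqref{eq:pchar}.

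To handle $U_{k+1}$, I would invoke the orthogonal decomposition
\[
P^\perp_{A_{k-1}} y \;=\; P^\perp_{A_k} y \;+\; \frac{\lambda_k D_k}{\|P^\perp_{A_{k-1}} X_{j_k}\|_2^2}\, P^\perp_{A_{k-1}} X_{j_k},
\qquad D_k := s_k - X_{j_k}^T (X_{A_{k-1}}^+)^T s_{A_{k-1}},
\]
which uses the fact that $\col(X_{A_{k-1}})^\perp = \col(X_{A_k})^\perp \oplus \spa(P^\perp_{A_{k-1}} X_{j_k})$ and that $X_{j_k}^T P^\perp_{A_{k-1}} y = \lambda_k D_k$. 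Substituting this into each competitor entry time $c(j, s, A_{k-1}, s_{A_{k-1}})^T y$ appearing inside $M_k^+$ rewrites it in the form $b_{j,s}^T P^\perp_{A_k} y + \beta_{j,s} \lambda_k$, where $b_{j,s} \in \col(X_{A_k})^\perp$ and $\beta_{j,s} \in \R$. The corresponding optimality inequality $\lambda_k \geq c(j, s, A_{k-1}, s_{A_{k-1}})^T y$ then rearranges into $\lambda_k(1-\beta_{j,s}) \geq b_{j,s}^T P^\perp_{A_k} y$, which becomes a lower bound on $\lambda_k$ when $1-\beta_{j,s} > 0$, an upper bound when $1-\beta_{j,s} < 0$, and a pure condition on $P^\perp_{A_k} y$ when $1-\beta_{j,s} = 0$.

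The step I expect to require the most care is verifying that the quantity $M_k^+$ isolated in \eqref{eq:pchar} collects precisely the lower bound cases above, while the upper bound and equality cases are exactly what gets repackaged as $M_k^-$, $M_\ell^0$, $M_\ell^S$ in the dropped conditions \eqref{eq:nchar}--\eqref{eq:schar}. This lower-upper split is implicit in the construction of Lemma \ref{lem:refined} carried out in Appendix \ref{app:refined}, but it must be read off explicitly. Granting it, $M_k^+$ reduces to a maximum of quantities $b_{j,s}^T P^\perp_{A_k} y/(1-\beta_{j,s})$, and hence is a function of $P^\perp_{A_k} y$ alone. This establishes \eqref{eq:ucond}, and the final claim that inference for $v^T\theta$ may be carried out with the tools of Section \ref{sec:poly} conditional on $\Gamma y \geq U$ then follows immediately from Corollary \ref{cor:truncate}.
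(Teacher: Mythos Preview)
Your proposal is correct and takes essentially the same approach as the paper: show that the linear functionals appearing in the maximum defining $M_k^+$ lie in $\col(X_{A_k})^\perp$, so that $U_{k+1}$ depends on $y$ only through $P^\perp_{A_k} y$. The paper's proof (Appendix~\ref{app:indep}) does this in two sentences by pointing directly to the explicit form of $M_k^+$ given in Lemma~\ref{lem:ms} (restated in Appendix~\ref{app:refined}), whereas you rederive that structure via the orthogonal decomposition of $P^\perp_{A_{k-1}} y$; in particular, your coefficient $\beta_{j,s}$ is exactly $\Sigma_{j_k,j}/\Sigma_{j_k,j_k}$ from Lemma~\ref{lem:ms}, so the ``lower bound'' split you flag as requiring the most care is precisely the sign condition $1 - \Sigma_{j_k,j}/\Sigma_{j_k,j_k} > 0$ already built into the definition of $S_k^+$ there---citing Lemma~\ref{lem:ms} directly collapses your argument to the paper's one-line version.
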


The proof is given in Appendix \ref{app:indep}.  For example, 
if we choose the contrast vector to be 
\smash{$v= (X_{A_k}^+)^T e_k$}, a case we have revisited throughout  
the paper, then this satisfies the conditions of Lemma
\ref{lem:indep}. 
Hence, for testing the significance of the projected regression
coefficient of the latest selected LAR variable, conditional on
$\Gamma y \geq U$, we may use the p-values and intervals derived in
Section \ref{sec:poly}.  We walk through this usage in the next
subsection.   

\subsection{The spacing test}

The (approximate) representation of the form 
$\{y : \Gamma y \geq U\}$ derived in the last subsection (where
$\Gamma$ is small, having $k+1$ rows), can only be used to
conduct inference over $v^T \theta$ for certain vectors $v$, namely,
those lying in the span of current active LAR variables. The
particular choice of contrast vector
\begin{equation}
\label{eq:lar_v}
v = c(j_k,s_k,A_{k-1},s_{A_{k-1}}) = 
\frac{P^\perp_{A_{k-1}} X_{j_k}}
{s_k - X_{j_k}^T (X_{A_{k-1}}^+)^T s_{A_{k-1}}},
\end{equation}
paired with the compact representation $\{y : \Gamma y \geq U\}$,
leads to a very special test that we name the {\it spacing test}.
From the definition \eqref{eq:lar_v}, and the well-known formula for
partial regression coefficients, we see that the null hypothesis being
considered is 
\begin{equation*}
H_0 : \;\, v^T \theta = 0 \iff H_0 : \;\, e_k^T X_{A_k}^+ \theta = 0,
\end{equation*}
i.e., the spacing test is a test for the $k$th coefficient in
the regression of $\theta$ on \smash{$X_{A_k}$}, just as we
have investigated all along under the equivalent choice of contrast
vector \smash{$v = (X_{A_k}^+)^T e_k$}. 
The main appeal of the spacing test lies in its simplicity.  Letting
\begin{equation}
\label{eq:weight}
\weight_k = \big\| (X_{A_k}^+)^T s_{A_k} - 
(X_{A_{k-1}}^+)^T s_{A_{k-1}} \big\|_2,
\end{equation}
the spacing test statistic is defined by
\begin{equation}
\label{eq:spacing}
T_k = \frac{\Phi(\lambda_{k-1} \frac{\weight_k}{\sigma}) -
\Phi(\lambda_k \frac{\weight_k}{\sigma})}
{\Phi(\lambda_{k-1} \frac{\weight_k}{\sigma}) -
\Phi(M^+_k \frac{\weight_k}{\sigma})}.
\end{equation}
Above, $\lambda_{k-1}$ and $\lambda_k$ are the knots at steps $k-1$
and $k$ in the LAR path, and $M^+_k$ is the random variable from Lemma  
\ref{lem:refined}.  The statistic in \eqref{eq:spacing} is one-sided,
implicitly aligned against the 
alternative $H_1 : v^T \theta > 0$, where $v$ is as in \eqref{eq:lar_v}.  
Since $v^T y = \lambda_k \geq 0$, the
denominator in \eqref{eq:lar_v} must have the same sign as  
\smash{$X_{j_k}^T P^\perp_{A_{k-1}} y$}, i.e., the same sign as 
\smash{$e_k^T X_{A_k}^+ y$}.  Hence 
\begin{equation*}
H_1 : \;\,v^T \theta > 0 \iff 
H_1 : \;\, \sign(e_k^T X_{A_k}^+ y) \cdot e_k^T X_{A_k}^+ \theta > 0,
\end{equation*}
i.e., the alternative hypothesis $H_1$ is that the 
population regression coefficient of the last selected variable is 
nonzero, and shares the sign of the sample regression coefficient of
the last variable.  This is a natural setup for a one-sided
alternative, as discussed in Section \ref{sec:oneor}.

The spacing test statistic falls directly out of our
polyhedral testing framework, adapted to the case of a random 
$U$ (Corollary \ref{cor:truncate} and Lemma \ref{lem:indep}).  It is a
valid p-value for testing $H_0 : v^T \theta = 0$, and has
exact conditional size. We emphasize this point by stating it in a
theorem.     

\begin{theorem}[\textbf{Spacing test}]
\label{thm:spacing}
Suppose that we have run $k$ steps of LAR. Represent the
conditions \eqref{eq:ochar}, \eqref{eq:pchar} in Lemma
\ref{lem:refined} as $\Gamma y \geq U$.  Specifically, we define 
$\Gamma$ to have the following $k+1$ rows:
\begin{align*}
&\Gamma_1 = c(j_1,s_1,A_0,s_{A_0})-c(j_2,s_2,A_1,s_{A_1}), \\
&\Gamma_2 = c(j_2,s_2,A_1,s_{A_1})-c(j_3,s_3,A_2,s_{A_2}), \\
&\ldots \\
&\Gamma_{k-1} =
c(j_{k-1},s_{k-1},A_{k-2},s_{A_{k-2}})-
c(j_k,s_k,A_{k-1},s_{A_{k-1}}), \\ 
&\Gamma_k = \Gamma_{k+1} = c(j_k,s_k,A_{k-1},s_{A_{k-1}}), 
\end{align*}
and $U$ to have the following $k+1$ components:
\begin{align*}
&U_1 = U_2 = \ldots = U_k = 0, \\
&U_{k+1} = M^+_k \Big(j_k,s_k, c(j_{k-1},s_{k-1},
  A_{k-2},s_{A_{k-2}})^T y\Big).
\end{align*}
For testing the null hypothesis $H_0 : e_k^T X_{A_k}^+ \theta =0$, the
spacing statistic $T_k$ defined in \eqref{eq:weight},
\eqref{eq:spacing} serves as an exact p-value conditional on
$\Gamma y \geq U$:
\begin{equation*}
\P_{e_k^T X_{A_k}^+ \theta =0} \Big(T_k \leq \alpha \,\Big|\, 
\Gamma y \geq U \Big) = \alpha,
\end{equation*}
for any $0 \leq \alpha \leq 1$.  
\end{theorem}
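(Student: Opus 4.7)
My plan is to apply the truncated Gaussian machinery from Lemma \ref{lem:condinf1}, as extended to random lower bounds by Corollary \ref{cor:truncate}, using the contrast vector $v = c(j_k,s_k,A_{k-1},s_{A_{k-1}})$ from \eqref{eq:lar_v}. The independence condition \eqref{eq:ucond} required by Corollary \ref{cor:truncate} is supplied by Lemma \ref{lem:indep}: since $v$ is a scalar multiple of $P^\perp_{A_{k-1}} X_{j_k}$, it lies in $\col(X_{A_k})$, so the random component $U_{k+1} = M^+_k$ depends on $y$ only through $(I - vv^T/\|v\|_2^2)y$. The task then reduces to identifying $\V^\mathrm{lo}, \V^\mathrm{up}, \V^0$ and the scale $v^T \Sigma v = \sigma^2 \|v\|_2^2$ in closed form and matching them to the spacing formula.

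The crux is computing $\rho = \Gamma v / \|v\|_2^2$. Writing $c_\ell = c(j_\ell,s_\ell,A_{\ell-1},s_{A_{\ell-1}})$, observe that $c_\ell \in \col(X_{A_\ell}) \subseteq \col(X_{A_{k-1}})$ for every $\ell < k$, while $v = c_k$ carries the factor $P^\perp_{A_{k-1}} X_{j_k}$ which is orthogonal to $\col(X_{A_{k-1}})$. Therefore $c_\ell^T v = 0$ for all $\ell \leq k-1$. Applied to the rows $\Gamma_\ell = c_\ell - c_{\ell+1}$ for $\ell = 1,\ldots,k-1$ and $\Gamma_k = \Gamma_{k+1} = c_k$, this yields $\rho_\ell = 0$ for $\ell = 1,\ldots,k-2$, $\rho_{k-1} = -1$, and $\rho_k = \rho_{k+1} = 1$. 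Plugging into the formulas of Corollary \ref{cor:truncate}, the single negative row produces $\V^\mathrm{up} = c_{k-1}^T y = \lambda_{k-1}$; the two positive rows produce $\V^\mathrm{lo} = \max\{0, M^+_k\} = M^+_k$ (with the simplification valid on the conditioning event, by how $M^+_k$ is assembled in Lemma \ref{lem:refined}); and the zero-$\rho$ rows give the orthogonal conditions $\V^0 \leq 0 \iff \lambda_{\ell+1} \leq \lambda_\ell$ for $\ell = 1,\ldots,k-2$, precisely matching the monotonicity in \eqref{eq:ochar}.

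The remaining ingredients are $v^T y = c_k^T y = \lambda_k$, which is immediate from \eqref{eq:lambdak}, and the weight identity $\|v\|_2 = 1/\omega_k$. The latter follows from a block computation based on $X_{A_k} = [X_{A_{k-1}},\,X_{j_k}]$: expanding $(X_{A_k}^+)^T s_{A_k}$ by the Schur-complement form of the pseudoinverse and subtracting $(X_{A_{k-1}}^+)^T s_{A_{k-1}}$ produces a vector in the direction of $P^\perp_{A_{k-1}} X_{j_k}$ whose norm equals the reciprocal of $\|v\|_2$. A brief remark confirms that the tested hypothesis $H_0 : v^T\theta = 0$ coincides with $H_0 : e_k^T X_{A_k}^+\theta = 0$, via the standard partial-regression identity $e_k^T X_{A_k}^+ = X_{j_k}^T P^\perp_{A_{k-1}} / \|P^\perp_{A_{k-1}} X_{j_k}\|_2^2$. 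Substituting $v^Ty = \lambda_k$, $\V^\mathrm{up} = \lambda_{k-1}$, $\V^\mathrm{lo} = M^+_k$, and $v^T\Sigma v = \sigma^2/\omega_k^2$ into the one-sided pivot $1 - F^{[\V^\mathrm{lo},\V^\mathrm{up}]}_{0,\,v^T\Sigma v}(v^Ty)$ of Lemma \ref{lem:condinf1} yields the expression in \eqref{eq:spacing}, and the uniformity conclusion transfers directly from \eqref{eq:T1null}.

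The main obstacle is the pair of quantitative identifications in the previous paragraph: the reduction $\max\{0,M^+_k\} = M^+_k$, which requires tracing through Lemma \ref{lem:refined} to confirm non-negativity of $M^+_k$ on the relevant cone, and the block-pseudoinverse identity $\|v\|_2\,\omega_k = 1$, which is mechanical but involves careful bookkeeping with the Schur complement of $X_{A_{k-1}}^T X_{A_{k-1}}$ in $X_{A_k}^T X_{A_k}$ and with the signs in $s_{A_k}$ versus $s_{A_{k-1}}$. Beyond these two checks, the structural part of the argument---picking $v$, invoking Lemma \ref{lem:indep} for independence, and reading off $\V^\mathrm{up}$ and $\V^\mathrm{lo}$---follows entirely from the single orthogonality observation $c_\ell \perp v$ for $\ell < k$.
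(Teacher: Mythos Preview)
Your proposal is correct and follows essentially the same route as the paper's proof: compute $\Gamma v$ via the orthogonality $c_\ell \perp v$ for $\ell<k$, read off $\V^{\mathrm{up}}=\lambda_{k-1}$ and $\V^{\mathrm{lo}}=M_k^+$, verify $\|v\|_2=1/\omega_k$, and then invoke Corollary~\ref{cor:truncate} together with Lemma~\ref{lem:indep}. The only cosmetic differences are that the paper dispatches $\|v\|_2=1/\omega_k$ by citing Lemma~10 of \citet{LTTT2013} rather than a Schur-complement calculation, and that the paper writes $\V^{\mathrm{lo}}=M_k^+$ directly without pausing at the $\max\{0,M_k^+\}$ step you (rightly) flag.
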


\begin{remark}
\label{rem:spacing}
The p-values from our polyhedral testing
theory depend on the truncation limits $\V^\mathrm{lo},
\V^\mathrm{up}$,
and in turn these depend on the polyhedral representation. For the
special polyhedron 
$\{y : \Gamma y \geq U\}$ considered in the theorem, it turns
out that $\V^\mathrm{lo}=M_k^+$ and $\V^\mathrm{up}=\lambda_{k-1}$,
which is fortuitous, as it means that no extra computation is  
needed to form  $\V^\mathrm{lo},\V^\mathrm{up}$ (beyond that already
needed for the path and \smash{$M_k^+$}). 
Furthermore, for the contrast vector $v$
in \eqref{eq:lar_v}, it turns out that $\|v\|_2=1/\weight_k$.
These two facts completely explain the spacing test statistic
\eqref{eq:spacing}, and the proof of Theorem \ref{thm:spacing},
presented in Appendix \ref{app:spacing}, reduces to checking these
facts.
\end{remark}

\begin{remark}
The event $\Gamma y \geq U$ is not exactly equivalent
to the LAR selection event at the $k$th step.  Recall that, as
defined, this only encapsulates the first part \eqref{eq:ochar},  
\eqref{eq:pchar} of a longer set of conditions
\eqref{eq:ochar}--\eqref{eq:schar} that provides the exact
characterization, as explained in Lemma \ref{lem:refined}.  However,
in practice, we have found that \eqref{eq:ochar}, \eqref{eq:pchar}
often provide a very reasonable approximation to the LAR selection
event. In most examples, the spacing p-values are either
close to those from the exact test for LAR, or exhibit even better
power.   
\end{remark}

\begin{remark}
A two-sided version of the spacing statistic in
\eqref{eq:spacing} is given by $T_k'=2 \cdot \min\{T_k,1-T_k\}$.  The
result in Theorem \ref{thm:spacing} holds for this two-sided
version, as well.  
\end{remark}

\subsection{Conservative spacing test}

The spacing statistic in \eqref{eq:spacing} is very simple and
concrete, but it still does depend on the random variable $M_k^+$.
The quantity $M_k^+$ is computable in $O(p)$ 
operations (see Appendix \ref{app:refined} for its definition), 
but it is not an output of standard software for computing the LAR
path (e.g., the R package {\tt lars}).
To further simplify matters, therefore, we might consider
replacing $M_k^+$ by the next knot in the LAR path, $\lambda_{k+1}$.
The motivation is that sometimes, but not always, $M_k^+$ and  
$\lambda_{k+1}$ will be equal. In fact, as argued in Appendix 
\ref{app:spacing_conservative}, it will always be true that 
$M_k^+ \leq \lambda_{k+1}$, leading us to a conservative version 
of the spacing test.

\begin{theorem}[\textbf{Conservative spacing test}]  
\label{thm:spacing_conservative}
After $k$ steps along the LAR path, define the modified spacing test
statistic   
\begin{equation}
\label{eq:spacing_mod}
\tilde{T}_k =
\frac{\Phi(\lambda_{k-1} \frac{\weight_k}{\sigma}) -
\Phi(\lambda_k \frac{\weight_k}{\sigma})}
{\Phi(\lambda_{k-1} \frac{\weight_k}{\sigma}) -
\Phi(\lambda_{k+1} \frac{\weight_k}{\sigma})}.
\end{equation}
Here, $\weight_k$ is as defined in \eqref{eq:weight}, and
$\lambda_{k-1},\lambda_k,\lambda_{k+1}$ are the LAR knots at steps
$k-1,k,k+1$ of the path, respectively. 
Let $\Gamma y \geq U$ denote the compact polyhedral representation of 
the spacing selection event step $k$ of the LAR path, as described in
Theorem \ref{thm:spacing}. Then \smash{$\tilde{T}_k$} is 
conservative, when viewed as a conditional p-value for testing the
null hypothesis \smash{$H_0 : e_k^T X_{A_k}^+ \theta =0$}: 
\begin{equation*}
\P_{e_k^T X_{A_k}^+ \theta =0} \Big(\tilde{T}_k \leq \alpha \,\Big|\,  
\Gamma y \geq U \Big) \leq \alpha,
\end{equation*}
for any $0 \leq \alpha \leq 1$.
\end{theorem}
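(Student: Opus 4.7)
The plan is to reduce Theorem~\ref{thm:spacing_conservative} directly to the exact spacing result in Theorem~\ref{thm:spacing} via a pointwise comparison of $T_k$ and $\tilde{T}_k$. The two statistics share a common numerator and differ only in the denominator, with $T_k$ using $\Phi(M_k^+ \weight_k/\sigma)$ and $\tilde{T}_k$ using $\Phi(\lambda_{k+1} \weight_k/\sigma)$. The entire argument therefore rests on the single deterministic inequality
\begin{equation*}
M_k^+ \leq \lambda_{k+1} \quad \text{on the event $\{\Gamma y \geq U\}$}.
\end{equation*}

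Given this inequality, the rest is essentially cosmetic. Since $\weight_k \geq 0$, $\sigma > 0$, and $\Phi$ is strictly increasing, replacing $M_k^+$ by $\lambda_{k+1}$ in the denominator can only make it smaller, while the numerator is untouched; the LAR knot ordering $\lambda_{k-1} \geq \lambda_k \geq \lambda_{k+1}$ keeps both denominators nonnegative, so I obtain $\tilde{T}_k \geq T_k$ pointwise on the selection event. Consequently $\{\tilde{T}_k \leq \alpha\} \subseteq \{T_k \leq \alpha\}$, and Theorem~\ref{thm:spacing} yields
\begin{equation*}
\P_{e_k^T X_{A_k}^+ \theta = 0}\bigl(\tilde{T}_k \leq \alpha \,\bigm|\, \Gamma y \geq U\bigr) \leq \P_{e_k^T X_{A_k}^+ \theta = 0}\bigl(T_k \leq \alpha \,\bigm|\, \Gamma y \geq U\bigr) = \alpha,
\end{equation*}
which is the desired conservative p-value guarantee.

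The main obstacle, and the only non-cosmetic step, is the deterministic bound $M_k^+ \leq \lambda_{k+1}$. Unpacking the definition of $M_k^+$ from the proof of Lemma~\ref{lem:refined}, I expect $M_k^+$ to be expressible as a maximum of certain linear functionals of $y$ taken over a restricted collection of candidate variable--sign pairs, whereas $\lambda_{k+1}$ is determined by a maximum of essentially the same type of candidates over the full set of pairs still in competition to enter the active model at the next knot. Enlarging the index set can only raise the maximum, which would yield $M_k^+ \leq \lambda_{k+1}$. The bookkeeping needed to identify the two index sets precisely and to verify the enclosure (in particular, ruling out extra candidates that might violate the bound) is the step I expect to be most delicate, and it is presumably what Appendix~\ref{app:spacing_conservative} carries out.
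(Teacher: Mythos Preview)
Your proposal is correct and follows exactly the paper's own argument: show $M_k^+ \leq \lambda_{k+1}$, deduce $\tilde{T}_k \geq T_k$ by monotonicity, and invoke Theorem~\ref{thm:spacing}. The only thing you are missing is that the inequality $M_k^+ \leq \lambda_{k+1}$ is not a new obstacle at all---it is precisely Lemma~\ref{lem:ord} (restated from Lemma~9 of \citet{LTTT2013}) applied at step $\ell = k$, so the ``delicate bookkeeping'' you anticipate is already done and the paper's proof simply cites that lemma.
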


\begin{remark}
It is not hard to verify that the modified statistic in
\eqref{eq:spacing_mod} is a monotone decreasing function of 
$\lambda_k-\lambda_{k+1}$, the spacing between LAR knots at steps
$k$ and $k+1$, hence the name ``spacing'' test.  Similarly, the exact 
spacing statistic in \eqref{eq:spacing} measures the magnitude of 
the spacing $\lambda_k-M_k^+$.
\end{remark}

\section{Empirical examples}
\label{sec:examples}
  
\subsection{Conditional size and power of FS and LAR tests} 

We examine the conditional type I error and power properties of the 
truncated Gaussian (TG) tests for FS and LAR, as well as the spacing
test for LAR, and the covariance test for LAR.  
We generated i.i.d.\ standard Gaussian predictors $X$ with $n=50$,
$p=100$, and then normalized each predictor (column of $X$) to have
unit norm. 
We fixed true regression coefficients $\beta^*= (5,-5, 0,\ldots
0)$, and we set $\sigma^2=1$.  For a total of 1000 repetitions, we
drew observations according to $y \sim N(X\beta^*,\sigma^2 I)$, 
ran FS and LAR, and computed p-values across the first 3 steps.
Figure \ref{fig:pvalues} displays the results in the form of QQ
plots.  The first plot in the figure shows the p-values at step 1,
conditional on the algorithm (FS or LAR) having made a correct
selection (i.e., having selected one of the first two variables). The
second plot shows the same, but at step 2.  The third plot shows
p-values at the step 3, conditional on the algorithm having made an  
incorrect selection.  

At step 1, all p-values display very good power, about 73\% at a 10\%
nominal type I error cutoff.  There is an interesting departure
between the tests at step 2: we see that the covariance and spacing
tests for LAR actually yield much better power than the exact TG
tests for FS and LAR: about 82\% for the former versus 35\% for the
latter, again at a nominal 10\% type I error level.  At step 3, the TG
and spacing tests produce uniform p-values, as desired; the covariance
test p-values are super-uniform, showing the conservativeness of this
method in the null regime.  

Why do the methods display such differences in power at step 2?  A
rough explanation is as follows. The spacing test, recall, is defined
by removing a subset of the polyhedral constraints for the 
conditioning event for LAR, thus its p-values are based on less   
conditioning than the exact TG p-values for LAR.  Because it
conditions on less, i.e., it uses a larger portion of the sample
space, it can 
deliver better power; and though it (as well as the covariance test)
is not theoretically guaranteed to control type I error in finite
samples, it certainly appears to do so empirically, seen in the third 
panel of Figure \ref{fig:pvalues}.
The covariance test is believed to behave more like the spacing test
than the exact TG test for LAR; this is based on an asymptotic 
equivalence between the covariance and spacing tests, given in 
Section \ref{sec:covtest}, and it explains their similarities in the
plots. 

\begin{figure}[hbtp]
\centering
\includegraphics[width=\textwidth]{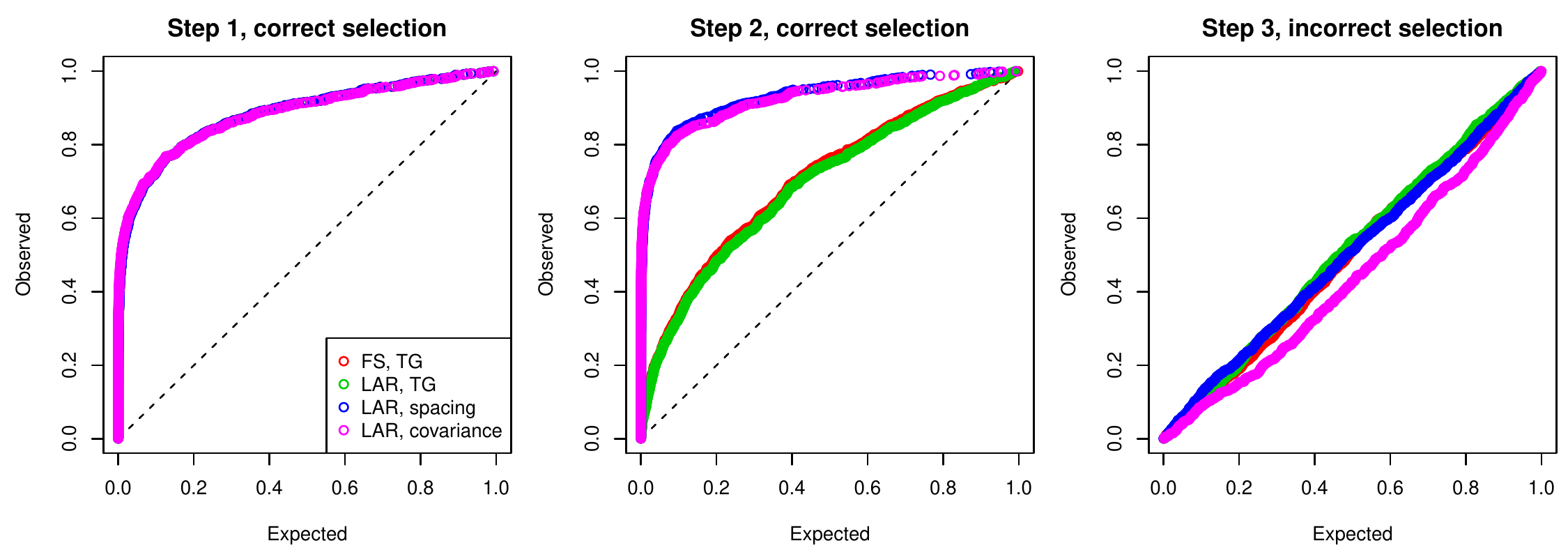}  
\caption{\it Simulated data with $n=50$, $p=100$, and two true active
  variables. Shown are p-values from the first 3 steps of FS and LAR,
  computed using the TG tests of Section \ref{sec:seq_tests}, the
  spacing test of Section \ref{sec:spacing}, and the covariance test
  of \citet{LTTT2013}, across 1000 repetitions (draws of $y$
  from the simulation model).}
\label{fig:pvalues}
\end{figure}

\subsection{Coverage of LAR conditional confidence intervals}  

In the same setup as the previous subsection, we computed
conditional confidence intervals over the first 3 steps of LAR,
at a 90\% coverage level.  Figure \ref{fig:intervals} shows these
intervals across the first 100 repetitions. Each interval here is
designed to cover the partial regression coefficient of a particular
variable, in a population regression of the mean $\theta=X\beta^*$ on
the variables in the current active set.  These population 
coefficients are drawn as black dots, and the colors of the intervals
reflect the identities of the variables being tested: red for variable  
1, green for variable 2, and blue for all other variables.  Circles
around population coefficients indicate that these particular
coefficients are not covered by their corresponding intervals.  The
miscoverage proportion is 12/100 in step 1, 11/100 in step 2, and
11/100 in step 3, all close to the nominal miscoverage level of 10\%.
An important remark: here we are counting marginal coverage of the 
intervals.  Our theory actually further guarantees {\it conditional}
coverage, for each model selection event; for example, among the red
intervals at step 1, the miscoverage proportion is 7/52, and among
green intervals, it is 5/47, both close to the nominal 10\% level.

\begin{figure}[htbp]
\centering
\includegraphics[width=\textwidth]{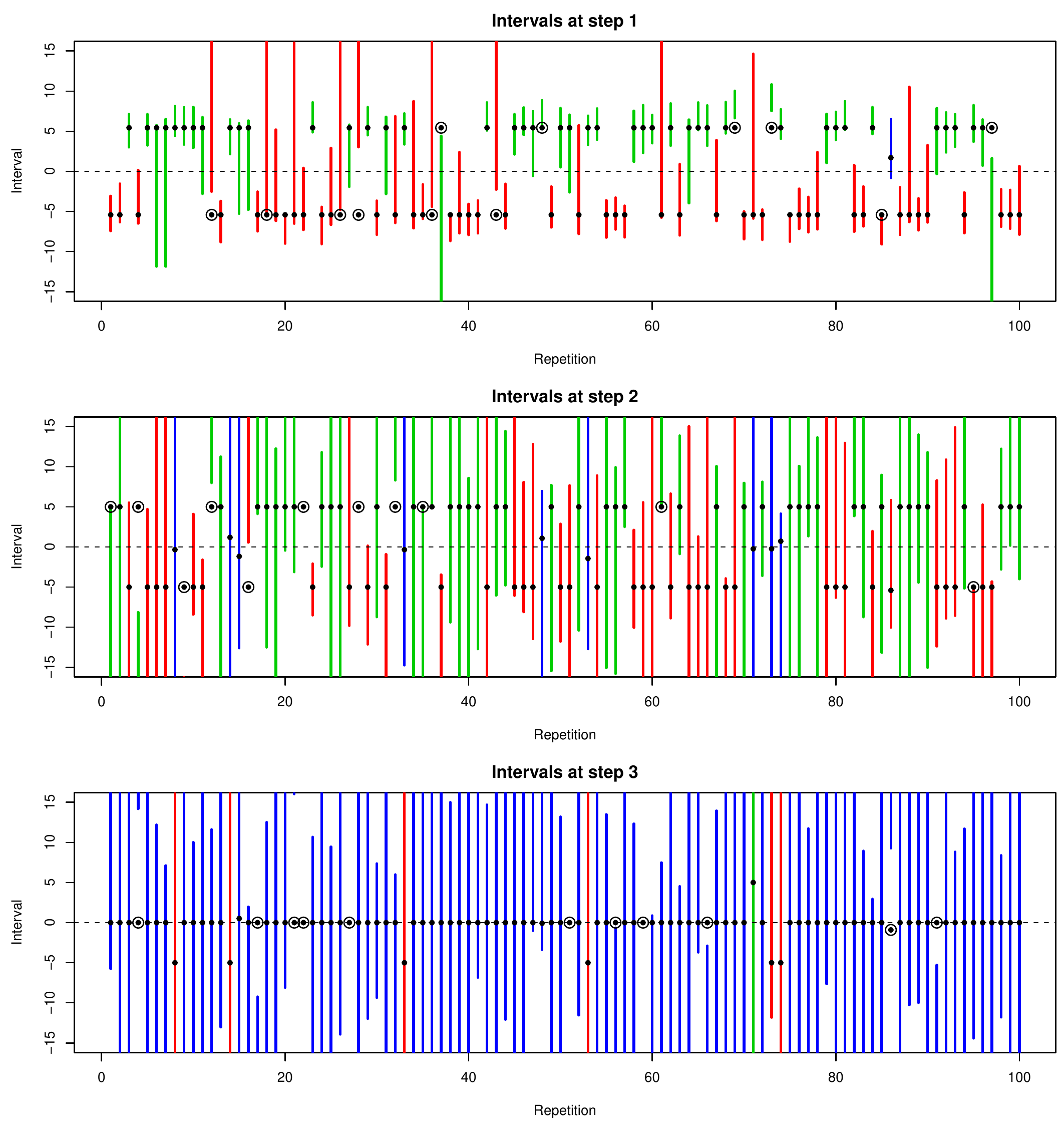}
\caption{\it Using the same setup as in Figure \ref{fig:pvalues},
  shown are 90\% confidence intervals for the selections made over the 
  first 3 steps of LAR, across 100 repetitions (draws of $y$ from the
  simulation model). The colors assigned to the intervals reflect the 
  identities of the variables whose partial regression coefficients
  are being tested: red for variable 1, green for variable 2, and blue
  for all others. The black dots are the true population coefficients,
  and circles around these dots denote miscoverages.
  The upper confidence limits for some of the parameters exceed the
  range for the y-axes on the plots (especially at step 3).}
\label{fig:intervals}
\end{figure}

\subsection{Comparison to the max-$|t|$-test}

The last two subsections demonstrated the unique properties of the
exact TG tests for FS and LAR.  For testing the significance of
variables entered by FS, \citet{buja2014} proposed what they call the 
{\it max-$|t|$-test}.  Here is a description. At the $k$th step of FS,
where $A_{k-1}$ is the current active list (with $k-1$ active
variables), let   
\begin{equation*}
t_{\max}(y)= \max_{j \notin A_{k-1}} \, 
\frac{| X_j^T P_{A_{k-1}}^\perp y |}
{\sigma \| P_{A_{k-1}} ^\perp X_j \|_2}. 
\end{equation*}
As the distribution of $t_{\max}(y)$ is generally intractable, 
we simulate $\epsilon \sim N(0,\sigma^2 I)$, and use this to estimate
null probability that $t_{\max}(\epsilon) > t_{\max}(y)$, which forms
our p-value.

We used the same setup as in the previous two subsections, but with an
entirely null signal, i.e., we set the mean to be
$\theta=X\beta^*=0$, in order to demonstrate the following point.  As
we step farther into the null regime (as we take more and more steps
with FS), the max-$|t|$-test becomes increasingly conservative,
whereas the exact TG test for FS continues to produce uniform
p-values, as expected.  The reason is that the TG test for FS at step
$k$ properly accounts for all selection events up to and including
step $k$, but the the max-$|t|$-test at step $k$ effectively ignores 
all selections occurring before this step, creating a conservative bias in 
the p-value. See Appendix \ref{app:maxt} for the plots.

\section{Relationship to the covariance test}
\label{sec:covtest}

There is an interesting connection between the LAR spacing test and
the covariance test of \citet{LTTT2013}. We first review the
covariance test and then discuss this connection. 

After $k$ steps of LAR, let $A_k$ denote the list of active variables
and \smash{$s_{A_k}$} denote the sign list, the same notation as we
have been using thus far.  The covariance test provides a significance
test for the $k$th step of LAR.  More precisely, it assumes an
underlying linear model $\theta=X\beta^*$, and tests the null
hypothesis 
\begin{equation*}
H_0 : \;\, A_{k-1} \supseteq \supp(\beta^*),
\end{equation*}
where $\supp(\beta^*)$ denotes the support of set of $\beta^*$ (the
true active set).  In words, this tests simultaneously the
significance of {\it any variable entered at step $k$ and later}.

Though its original definition is motivated from a difference in
the (empirical) covariance between LAR fitted values,
the covariance statistic can be written in an equivalent form
that is suggestive of a connection to the spacing test.  This form, at
step $k$ of the LAR path, is
\begin{equation}
\label{eq:covtest}
C_k = \weight_k^2 \cdot \lambda_k (\lambda_k - \lambda_{k+1}) 
/ \sigma^2, 
\end{equation}
where $\lambda_k,\lambda_{k+1}$ are the LAR knots at steps $k$ and
$k+1$ of the path, and $\weight_k$ is the weight in \eqref{eq:weight}.
(All proofs characterizing the null distribution of the covariance
statistic in \citet{LTTT2013} use this equivalent definition.)  The
main result (Theorem 3) in \citet{LTTT2013} is that, under
correlation restrictions on the predictors $X$ and other conditions,
the covariance statistic \eqref{eq:covtest} has a conservative
$\Exp(1)$ limiting distribution under the null hypothesis.  Roughly,
they show that
\begin{equation*}
\lim_{n,p \rightarrow \infty} \P_{A_{k-1} \supseteq \supp(\beta^*)}
\Big(C_k > t \, \Big| \, \hat{A}_k(y) =A_k, \, 
\hat{s}_{A_k}(y) =s_{A_k} \Big) \leq e^{-t},
\end{equation*}
for all $t \geq 0$.  

A surprising result, perhaps, is that the
covariance test in \eqref{eq:covtest} and the spacing test in
\eqref{eq:spacing_mod} are asymptotically equivalent.  
The proof for this equivalence uses relatively straightforward
calculations with Mills' inequalities, and is deferred until Appendix
\ref{app:spacing_equiv}. 

\begin{theorem}[\textbf{Asymptotic equivalence between spacing 
and covariance tests}]
\label{thm:spacing_equiv}
After a fixed number $k$ steps of LAR, the spacing
p-value in \eqref{eq:spacing_mod} and the covariance statistic
in \eqref{eq:covtest} are asymptotically equivalent, in the following
sense.  Assume an asymptotic regime in which 
\begin{gather*}
\weight_k \lambda_{k+1} \cp \infty, \;\;\;\text{and} \\
\weight_k^2 \cdot 
\lambda_{k-1}(\lambda_{k-1}-\lambda_k) \cp \infty, 
\end{gather*}
denoting convergence in probability. The spacing statistic,
transformed by the inverse $\Exp(1)$ survival function, satisfies
\begin{equation*}
-\log \Bigg(
\frac{\Phi(\lambda_{k-1} \frac{\weight_k}{\sigma}) -
\Phi(\lambda_k \frac{\weight_k}{\sigma})}
{\Phi(\lambda_{k-1} \frac{\weight_k}{\sigma}) -
\Phi(\lambda_{k+1} \frac{\weight_k}{\sigma})} \Bigg) =
\frac{\weight_k^2}{\sigma^2} \lambda_k (\lambda_k - \lambda_{k+1})
+ o_P(1).  
\end{equation*}
Said differently, the asymptotic p-value of the covariance statistic,
under the $\Exp(1)$ limit, satisfies
\begin{equation*}
\exp \bigg(-
\frac{\weight_k^2}{\sigma^2} \lambda_k (\lambda_k - \lambda_{k+1})
\bigg) = \Bigg(
\frac{\Phi(\lambda_{k-1} \frac{\weight_k}{\sigma}) -
\Phi(\lambda_k \frac{\weight_k}{\sigma})}
{\Phi(\lambda_{k-1} \frac{\weight_k}{\sigma}) -
\Phi(\lambda_{k+1} \frac{\weight_k}{\sigma})} \Bigg) ( 1+o_P(1)).
\end{equation*}
Above, we use $o_P(1)$ to denote terms converging to zero in 
probability. 
\end{theorem}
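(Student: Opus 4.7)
My plan is to carry out a Mills'-ratio expansion. Write $a = \weight_k\lambda_{k-1}/\sigma$, $b = \weight_k\lambda_k/\sigma$, and $c = \weight_k\lambda_{k+1}/\sigma$, so $a \ge b \ge c \ge 0$. The first hypothesis says $c \cp \infty$ and the second says $a(a-b) \cp \infty$. Apply the standard tail expansion $1-\Phi(x) = \phi(x)x^{-1}(1-x^{-2}+O(x^{-4}))$ as $x\to\infty$ to each of $\Phi(a)-\Phi(b) = [1-\Phi(b)] - [1-\Phi(a)]$ and $\Phi(a)-\Phi(c) = [1-\Phi(c)] - [1-\Phi(a)]$. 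Since $a^2-b^2 \ge a(a-b) \cp \infty$ and $a^2-c^2 \ge a(a-b) \cp \infty$, the ratios $\phi(a)/\phi(b) = e^{-(a^2-b^2)/2}$ and $\phi(a)/\phi(c) = e^{-(a^2-c^2)/2}$ both vanish in probability, so the $1-\Phi(a)$ contribution drops out and
\begin{equation*}
\Phi(a)-\Phi(b) = \frac{\phi(b)}{b}(1+o_P(1)), \qquad \Phi(a)-\Phi(c) = \frac{\phi(c)}{c}(1+o_P(1)).
\end{equation*}

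Next, dividing and using $\phi(b)/\phi(c) = e^{-(b^2-c^2)/2}$ gives
\begin{equation*}
\frac{\Phi(a)-\Phi(b)}{\Phi(a)-\Phi(c)} = \frac{c}{b}\,e^{-(b^2-c^2)/2}\,(1+o_P(1)),
\end{equation*}
so, taking negative logarithms,
\begin{equation*}
-\log\frac{\Phi(a)-\Phi(b)}{\Phi(a)-\Phi(c)} = \frac{b^2-c^2}{2} + \log(b/c) + o_P(1).
\end{equation*}
The elementary identity $(b^2-c^2)/2 = b(b-c) - (b-c)^2/2$ matches the leading term $b(b-c) = \weight_k^2 \lambda_k(\lambda_k-\lambda_{k+1})/\sigma^2$ with the desired expression, and reduces the theorem to showing that the remainder $R := \log(b/c) - (b-c)^2/2$ is $o_P(1)$.

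Finally, I would control $R$ using the tightness of the covariance statistic. By Theorem~3 of \citet{LTTT2013}, under the null regime in which the present result is meant to be interpreted, $C_k = b(b-c)$ has an $\Exp(1)$ limit and in particular is $O_P(1)$; this forces $b-c = O_P(1/b)$. Hence $(b-c)^2 = O_P(1/b^2) = o_P(1)$, and the Taylor expansion $\log(b/c) = \log(1+(b-c)/c)$ combined with $b-c = O_P(1/b)$ and $c \cp \infty$ yields $\log(b/c) = O_P(1/(bc)) = o_P(1)$. Together these give $R \cp 0$. The principal obstacle is exactly this last step: the two stated divergence assumptions are tailored to make the $\phi(a)/a$ pieces in the Mills' expansion negligible and to justify the leading approximations, but they do not by themselves kill the next-order correction $\log(b/c) - (b-c)^2/2$; that cancellation requires an additional input (here, the asymptotic tightness of $b(b-c)$) to close the argument, and handling it carefully in the regime where both statistics are moderate is the delicate part of the proof.
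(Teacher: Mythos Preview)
Your approach is essentially the same as the paper's: both proofs apply Mills' inequalities to reduce $\Phi(a)-\Phi(b)$ and $\Phi(a)-\Phi(c)$ to $\phi(b)/b$ and $\phi(c)/c$ up to $(1+o_P(1))$ factors, then take logs and split $(b^2-c^2)/2$ to isolate the covariance statistic $b(b-c)$.

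You are right that the two stated divergence hypotheses do not by themselves force the remainder $\log(b/c)-(b-c)^2/2$ to vanish; some additional control on the size of $b-c$ is required. The paper's proof handles this in the same way, by invoking an extra assumption not listed in the theorem statement---there it is phrased as ``using the assumption that $\lambda_k/\lambda_{k+1}\to 1$ in probability.'' Your choice to appeal instead to the tightness $b(b-c)=O_P(1)$ from Theorem~3 of \citet{LTTT2013} is a cleaner way to close the gap, since it directly gives $b-c=O_P(1/b)=o_P(1)$ and hence both $(b-c)^2=o_P(1)$ and $\log(b/c)=o_P(1)$; the paper's version $\lambda_k/\lambda_{k+1}\to 1$ kills $\log(b/c)$ but does not obviously control $(b-c)^2$ without a further argument. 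In short, your proof matches the paper's and is, if anything, more explicit about where the extra input is needed.
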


\begin{remark}
The asymptotic equivalence described in this theorem 
raises an interesting and unforeseen point about the one-sided nature
of the covariance test.  That is, the covariance statistic is seen to
be asymptotically tied to the spacing p-value in
\eqref{eq:spacing_mod}, which, recall, we can interpret as testing the
null $H_0 : e_k^T X_{A_k}^+ \theta =0$ against the one-sided
alternative \smash{$H_1 :\sign(e_k^T X_{A_k}^+ y)  
\cdot e_k^T X_{A_k}^+ \theta > 0$.}  The covariance test in
\eqref{eq:covtest} is hence implicitly aligned to have power when the 
selected variable at the $k$th step has a sign that matches that of   
the projected population effect of this variable.
\end{remark}

\section{Discussion}
\label{sec:discussion}

In a regression model with Gaussian errors, we have presented 
a method for exact inference, conditional on a polyhedral
constraint on the observations $y$.  Since the FS, LAR, and lasso
algorithms admit polyhedral representations for their 
model selection events, our framework produces exact p-values and  
confidence intervals post model selection for any of these 
adaptive regression procedures.  One particularly special and simple 
case arises when we use our framework to test the significance of the 
projected regression coefficient, in the population, of the latest
selected variable at a given step of LAR. This leads to the spacing
test, which is asymptotically equivalent to the covariance test of 
\citet{LTTT2013}.  An
R language package {\tt selectiveInference}, that implements the
proposals in this paper, is freely available on the CRAN repository,
as well as \url{https://github.com/selective-inference/R-software}.  A
Python implementation is also available, at 
\url{https://github.com/selective-inference/Python-software}. 

\subsection*{Acknowledgements} 

We would like to thank Andreas Buja, Max Grazier G'Sell,
Alessandro Rinaldo, and Larry Wasserman for helpful comments and
discussion.  We would also like to thank the editors and referees
whose comments led to a complete overhaul of this paper!
Richard Lockhart was supported by the Natural 
Sciences and Engineering Research Council of Canada; 
Jonathan Taylor was supported by NSF grant DMS 1208857 and AFOSR grant 
113039; Ryan Tibshirani was supported by NSF grant DMS-1309174;
Robert Tibshirani was supported by NSF grant DMS-9971405 and 
NIH grant N01-HV-28183.

\appendix

\section{Proofs}

\subsection{Proof of Lemma \ref{lem:refined}}
\label{app:refined}

We first consider an alternative characterization for the optimality
of a pair $(j_k,s_k)$ at an iteration $k$ of LAR. This
characterization is already proved in Lemma 7 of \citet{LTTT2013}, and
we present it here for reference.    

\begin{lemma}[Lemma 7 of \citealt{LTTT2013}]
\label{lem:ms}
Consider an iteration $\ell$ of LAR, and define
\begin{equation*}
\Sigma_{j,j'} = c(j,s,A_{\ell-1},s_{A_{\ell-1}})^T 
c(j',s',A_{\ell-1},s_{A_{\ell-1}}),
\end{equation*}
for variables $j,j' \notin A_{\ell-1}$ and signs $s,s'$.  (Note that
$\Sigma_{j,j'}$ actually depends on $s,s'$, but we suppress this
notationally for brevity.)  Also define
\begin{align*}
S_\ell^+(j,s,\rho) &= \Big\{ (j',s') : j' \notin A\cup\{j\}, \; 
1-\Sigma_{j,j'}/\Sigma_{jj} > 0,
\; c(j',s',A_{\ell-1},s_{A_{\ell-1}})^T y \leq \rho \Big\} \\
S_\ell^-(j,s,\rho) &= \Big\{ (j',s') : j' \notin A\cup\{j\}, \; 
1-\Sigma_{j,j'}/\Sigma_{jj} < 0,
\; c(j',s',A_{\ell-1},s_{A_{\ell-1}})^T y \leq \rho \Big\} \\
S_\ell^0(j,s,\rho) &= \Big\{ (j',s') : j' \notin A\cup\{j\}, \; 
1-\Sigma_{j,j'}/\Sigma_{jj} = 0,
\; c(j',s',A_{\ell-1},s_{A_{\ell-1}})^T y \leq \rho \Big\},
\end{align*}
and 
\begin{align*}
M_\ell^+(j,s,\rho) = \max_{(j',s') \in S_\ell^+(j,s,\rho)} \,
\frac{c(j',s',A_{\ell-1},s_{A_{\ell-1}})^T y -
  (\Sigma_{j,j'}/\Sigma_{j,j}) 
  \cdot  c(j,s,A_{\ell-1},s_{A_{\ell-1}})^T y}
{1 - (\Sigma_{j,j'}/\Sigma_{j,j})} \\
M_\ell^-(j,s,\rho) = \min_{(j',s') \in S_\ell^-(j,s,\rho)} \,
\frac{c(j',s',A_{\ell-1},s_{A_{\ell-1}})^T y -
  (\Sigma_{j,j'}/\Sigma_{j,j}) 
  \cdot  c(j,s,A_{\ell-1},s_{A_{\ell-1}})^T y}
{1 - (\Sigma_{j,j'}/\Sigma_{j,j})} \\
M_\ell^0(j,s,\rho) = \max_{(j',s') \in S_\ell^0(j,s,\rho)} \,
c(j',s',A_{\ell-1},s_{A_{\ell-1}})^T y - (\Sigma_{j,j'}/\Sigma_{j,j}) 
  \cdot  c(j,s,A_{\ell-1},s_{A_{\ell-1}})^T y.
\end{align*}
Then LAR selects $j_\ell$ and $s_\ell$ at iteration
$\ell$ if and only if
\begin{align}
\label{eq:charfirst}
c(j_\ell,s_\ell,A_{\ell-1},s_{A_{\ell-1}})^T y &\leq 
\lambda_{\ell-1}, \\ 
c(j_\ell,s_\ell,A_{\ell-1},s_{A_{\ell-1}})^T y &\geq 0, \\
c(j_\ell,s_\ell,A_{\ell-1},s_{A_{\ell-1}})^T y 
\label{eq:charmid}
&\geq M_\ell^+(j_\ell,s_\ell,\lambda_{\ell-1}), \\ 
c(j_\ell,s_\ell,A_{\ell-1},s_{A_{\ell-1}})^T y 
&\leq M_\ell^-(j_\ell,s_\ell,\lambda_{\ell-1}), \\
\label{eq:charlast}
0 & \geq M_\ell^0(j_\ell,s_\ell,\lambda_{\ell-1}).
\end{align}
Further, the triplet
\smash{$(M_\ell^+(j_\ell,s_\ell),M_\ell^-(j_\ell,s_\ell),
M_\ell^0(j_\ell,s_\ell))$} is  
independent of $c(j_\ell,s_\ell,A_{\ell-1},s_{A_{\ell-1}})^T y$, for
fixed $j_\ell,s_\ell$. 
\end{lemma}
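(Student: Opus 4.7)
The plan is to start from the LAR update rule \eqref{eq:jksk}: at iteration $\ell$, the pair $(j_\ell, s_\ell)$ is the argmax of $c(j, s, A_{\ell-1}, s_{A_{\ell-1}})^T y$ over $(j, s)$ with $j \notin A_{\ell-1}$, $s \in \{-1,1\}$, subject to the eligibility constraint $c(j, s, \cdot)^T y \leq \lambda_{\ell-1}$ encoded in the indicator factor of \eqref{eq:jksk}. I would first unfold this argmax into three jointly necessary and sufficient families of scalar inequalities: eligibility of the winner, $c(j_\ell, s_\ell)^T y \leq \lambda_{\ell-1}$, which is \eqref{eq:charfirst}; positivity $c(j_\ell, s_\ell)^T y \geq 0$, reflecting $\lambda_\ell \geq 0$; and pairwise dominance, $c(j_\ell, s_\ell)^T y \geq c(j', s')^T y$ for every other eligible $(j', s')$. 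The remaining task is to recast the pairwise dominance in the form \eqref{eq:charmid}--\eqref{eq:charlast}.

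The key algebraic move is to decompose each competitor along and orthogonal to the winner. For each $(j',s')$, let $\alpha_{j'} = \Sigma_{j_\ell,j'}/\Sigma_{j_\ell,j_\ell}$ and define $r(j', s') = c(j', s', \cdot) - \alpha_{j'}\, c(j_\ell, s_\ell, \cdot)$, which is orthogonal to $c(j_\ell, s_\ell, \cdot)$ by construction. Writing $c(j', s')^T y = \alpha_{j'} c(j_\ell, s_\ell)^T y + r(j', s')^T y$ and substituting into $c(j_\ell, s_\ell)^T y \geq c(j', s')^T y$ yields $(1-\alpha_{j'})\, c(j_\ell, s_\ell)^T y \geq r(j', s')^T y$. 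Splitting by the sign of $1-\alpha_{j'}$ reproduces exactly the three cases of the lemma: a positive sign divides cleanly to give a lower bound for $c(j_\ell, s_\ell)^T y$ of the form $(c(j',s')^T y - \alpha_{j'} c(j_\ell, s_\ell)^T y)/(1-\alpha_{j'})$, which after taking the worst competitor becomes $M_\ell^+$ in \eqref{eq:charmid}; the negative sign flips the inequality to produce the upper bound $M_\ell^-$; and the degenerate case $1-\alpha_{j'}=0$ cancels the $c(j_\ell, s_\ell)^T y$ terms and leaves $0 \geq r(j', s')^T y$, which collects into $M_\ell^0$ in \eqref{eq:charlast}. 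The eligibility filter $c(j',s')^T y \leq \lambda_{\ell-1}$ on the competitors carries through as the set-membership conditions defining $S_\ell^+$, $S_\ell^-$, $S_\ell^0$.

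For the independence claim I would exploit Gaussian orthogonality. Setting $v = c(j_\ell, s_\ell, \cdot)$, each $r(j',s')^T y$ is a linear functional of $y$ orthogonal to $v$ by design, hence a function of $P_{v^\perp} y$ and thus independent of $v^T y$ under $y \sim N(\theta,\sigma^2 I)$. Moreover $\lambda_{\ell-1} = c(j_{\ell-1}, s_{\ell-1}, A_{\ell-2}, s_{A_{\ell-2}})^T y$, and $c(j_{\ell-1}, s_{\ell-1}, A_{\ell-2}, s_{A_{\ell-2}})$ lies in $\col(X_{A_{\ell-1}})$ whereas $v$ lies in its orthogonal complement, so $\lambda_{\ell-1}$ is itself a function of $P_{v^\perp} y$. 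All the arithmetic quantities entering $M_\ell^+$, $M_\ell^-$, $M_\ell^0$ are therefore functions of $P_{v^\perp} y$ alone, yielding the claimed independence from $v^T y$.

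The main obstacle I anticipate is handling the set memberships with care. The naive rewriting of $c(j', s')^T y \leq \lambda_{\ell-1}$ in the $(v^T y, P_{v^\perp} y)$ decomposition still carries a residual term $\alpha_{j'}\, v^T y$, so membership in $S_\ell^\pm$ can in principle depend on $v^T y$. I would resolve this by further conditioning on the realized $S_\ell^+, S_\ell^-, S_\ell^0$, which is harmless for the downstream polyhedral inference because one can always marginalize back such auxiliary selection information (see Section \ref{sec:margin}); within each such stratum the triplet $(M_\ell^+, M_\ell^-, M_\ell^0)$ reduces to a max/min over a fixed finite list of linear functionals of $P_{v^\perp} y$, completing the independence argument.
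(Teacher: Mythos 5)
The paper does not actually prove this lemma itself: it is quoted from Lemma 7 of \citet{LTTT2013} and presented ``for reference,'' so there is no internal proof to compare against. That said, your argument is the standard one and is correct in its essentials: unfolding the argmax in \eqref{eq:jksk} into eligibility of the winner, nonnegativity, and pairwise dominance over eligible competitors, and then splitting each dominance inequality $(1-\Sigma_{j_\ell,j'}/\Sigma_{j_\ell,j_\ell})\,c(j_\ell,s_\ell,A_{\ell-1},s_{A_{\ell-1}})^T y \geq r(j',s')^T y$ according to the sign of the leading coefficient reproduces exactly the three bounds $M_\ell^+, M_\ell^-, M_\ell^0$ and their index sets, with the eligibility filter carried along as the membership condition. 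Your discussion of the independence claim is the most valuable part: you correctly observe that the conditions $c(j',s',A_{\ell-1},s_{A_{\ell-1}})^T y \leq \lambda_{\ell-1}$ defining $S_\ell^+, S_\ell^-, S_\ell^0$ retain a component $\alpha_{j'}\, v^T y$ along $v = c(j_\ell,s_\ell,A_{\ell-1},s_{A_{\ell-1}})$, so the triplet is not automatically a function of $P_{v^\perp} y$ alone even though each individual linear functional in the max/min is. Your fix---condition on the realized sets and marginalize afterwards---is consistent with how the paper treats these sets elsewhere: the exact LAR selection event \eqref{eq:Pk_lar} conditions on $\hat{S}_\ell(y) = S_\ell$, and condition \eqref{eq:schar} of Lemma \ref{lem:refined} is included precisely to pin the sets down. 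The one caveat is that what you obtain is independence \emph{conditional on the realized competitor sets}, not the unqualified independence as literally written in the lemma; the conditional form is what the downstream arguments (Lemma \ref{lem:refined} and the proof of Lemma \ref{lem:indep}) in fact rely on, but you should state your conclusion in that conditional form rather than as an unconditional claim.
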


We now recall another result of \citet{LTTT2013} that is important in
our context.   

\begin{lemma}[Lemma 9 of \citealt{LTTT2013}]
\label{lem:ord}
At any iteration $\ell$ of LAR, we have
\begin{equation*}
M_\ell^+(j_\ell,s_\ell,\lambda_{\ell-1}) \leq
c(j_{\ell+1},s_{\ell+1},A_\ell,s_{A_\ell})^T y. 
\end{equation*}
\end{lemma}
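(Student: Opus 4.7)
The plan is to recognize that the fraction inside the maximum defining \smash{$M^+_\ell(j_\ell, s_\ell, \lambda_{\ell-1})$} coincides with \smash{$c(j', s', A_\ell, s_{A_\ell})^T y$}, the very quantity whose maximum over eligible pairs defines the next LAR knot \smash{$\lambda_{\ell+1} = c(j_{\ell+1}, s_{\ell+1}, A_\ell, s_{A_\ell})^T y$}. Once this identification is established, the inequality follows immediately from the argmax property appearing in \eqref{eq:jksk}--\eqref{eq:lambdak}: each summand of the max defining \smash{$M^+_\ell$} is bounded by \smash{$\lambda_{\ell+1}$}, so the max itself is as well.

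The first step carries out the identification via a one-step LAR update. Setting \smash{$u = c(j_\ell, s_\ell, A_{\ell-1}, s_{A_{\ell-1}})$} and \smash{$w = c(j', s', A_{\ell-1}, s_{A_{\ell-1}})$}, so that \smash{$\Sigma_{j_\ell, j_\ell} = \|u\|_2^2$} and \smash{$\Sigma_{j_\ell, j'} = u^T w$}, I would use the rank-one projection update \smash{$P^\perp_{A_\ell} = P^\perp_{A_{\ell-1}} - P^\perp_{A_{\ell-1}} X_{j_\ell} X_{j_\ell}^T P^\perp_{A_{\ell-1}} / \|P^\perp_{A_{\ell-1}} X_{j_\ell}\|_2^2$} together with the block-inverse formula for \smash{$(X_{A_\ell}^T X_{A_\ell})^{-1}$} to derive the update identity
\begin{equation*}
c(j', s', A_\ell, s_{A_\ell})^T y \;=\; \frac{w^T y - (u^T w / \|u\|_2^2) \, u^T y}{1 - u^T w / \|u\|_2^2}.
\end{equation*}
For \smash{$(j', s') \in S^+_\ell(j_\ell, s_\ell, \lambda_{\ell-1})$} the denominator is strictly positive by the very definition of $S^+_\ell$, and the right-hand side is exactly the summand whose max over $S^+_\ell$ defines $M^+_\ell$.

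The second step matches the eligibility conditions. The membership constraint $w^T y \leq \lambda_{\ell-1}$ defining $S^+_\ell$ translates, via the update identity and the identity $u^T y = \lambda_\ell$, into \smash{$c(j', s', A_\ell, s_{A_\ell})^T y \leq \lambda_\ell$}, which is precisely the indicator constraint appearing in \eqref{eq:jksk} at step $\ell+1$. Hence every $(j', s') \in S^+_\ell$ is a legal candidate in the LAR argmax at step $\ell+1$, so \smash{$c(j', s', A_\ell, s_{A_\ell})^T y \leq \lambda_{\ell+1}$}; taking the maximum over $S^+_\ell$ finishes the proof.

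The main obstacle is the bookkeeping in Step 1. The scalar denominator \smash{$s - X_j^T(X_{A_{\ell-1}}^+)^T s_{A_{\ell-1}}$} hidden inside the definition of $c$ must be propagated carefully through the block-inverse expansion (with the augmented sign list \smash{$s_{A_\ell} = (s_{A_{\ell-1}}, s_\ell)$}) so that all such factors cancel and the algebra collapses to the clean ratio above; this is where the positivity condition \smash{$1 - \Sigma_{j_\ell, j'}/\Sigma_{j_\ell, j_\ell} > 0$} encoding membership in $S^+_\ell$ enters naturally, ensuring both that the ratio is well-defined and that the inequality derived in Step 2 points in the right direction.
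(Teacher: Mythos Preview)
The paper does not supply its own proof of this lemma; it merely cites Lemma~9 of \citet{LTTT2013}.  Your approach---identifying each summand in the definition of $M_\ell^+$ with $c(j',s',A_\ell,s_{A_\ell})^T y$ via the rank-one update, and then arguing that each such $(j',s')$ is eligible in the step-$(\ell+1)$ LAR argmax---is correct in outline and is essentially how the result is proved in \citet{LTTT2013}.

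There is one slip in your Step~2.  You write that the membership constraint $w^T y \leq \lambda_{\ell-1}$ ``translates, via the update identity and $u^T y = \lambda_\ell$, into $c(j',s',A_\ell,s_{A_\ell})^T y \leq \lambda_\ell$.''  But the update identity with $1-\beta>0$ (where $\beta = u^T w/\|u\|_2^2$) shows that $c(j',s',A_\ell,s_{A_\ell})^T y \leq \lambda_\ell$ is equivalent to $w^T y \leq \lambda_\ell$, which is \emph{strictly stronger} than the membership constraint $w^T y \leq \lambda_{\ell-1}$ (since $\lambda_\ell \leq \lambda_{\ell-1}$).  The membership constraint alone does not deliver this.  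What does deliver it is the optimality of $(j_\ell,s_\ell)$ at step~$\ell$: by \eqref{eq:jksk}--\eqref{eq:lambdak}, $\lambda_\ell = u^T y$ is the \emph{maximum} of $c(j,s,A_{\ell-1},s_{A_{\ell-1}})^T y$ over all pairs satisfying the indicator, so every $(j',s')\in S_\ell^+$ in particular satisfies $w^T y \leq \lambda_\ell$.  With this correction your argument goes through cleanly.
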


Finally, then, to build a list of constraints that are equivalent to
the LAR algorithm selecting variables $A_k$ and signs $s_{A_k}$
through step $k$, we intersect conditions
\eqref{eq:charfirst}--\eqref{eq:charlast} from Lemma \ref{lem:ms}
over $\ell=1,\ldots k$. Notice that for each $\ell<k$, the inequality 
in \eqref{eq:charmid} can be dropped, because it is implied by
\eqref{eq:charfirst} and the result of Lemma \ref{lem:ord},
\begin{equation*}
c(j_\ell,s_\ell,A_{\ell-1},s_{A_{\ell-1}})^T y \geq 
c(j_{\ell+1},s_{\ell+1},A_\ell,s_{A_\ell})^T y \geq 
M^+\Big(j_\ell,s_\ell,c(j_\ell,s_\ell,A_{\ell-1},s_{A_{\ell-1}})^T y\Big).
\end{equation*}
This gives rise to conditions \eqref{eq:ochar}--\eqref{eq:zchar} in
Lemma \ref{lem:refined}.  The last condition \eqref{eq:schar} is
needed to specify the sets $S_\ell$, $\ell=1,\ldots k$, and its
construction follows that given in Section \ref{sec:poly_lar}.

\subsection{Proof of Lemma \ref{lem:indep}}
\label{app:indep}

One can see from its definition in Lemma \ref{lem:ms} that $M_k^+$
is a maximum over linear functions of $y$ that are orthogonal to 
\smash{$\mathrm{col}(X_{A_k})$}, the span of active variables.   
Hence, when $v \in \mathrm{col}(A_k)$, we see that $M_k^+$ is a
function of $(I-vv^T/\|v\|_2^2) y$, certifying condition
\eqref{eq:ucond}. 

\subsection{Proof of Theorem \ref{thm:spacing}}
\label{app:spacing}


As pointed out in Remark \ref{rem:spacing}  after the theorem, we only 
need to prove 
that $\V^\mathrm{lo}=M_k^+$, $\V^\mathrm{up}=\lambda_{k-1}$, and
$\|v\|_2=1/\weight_k$, and the result will then follow from Corollary 
\ref{cor:truncate} and the polyhedral inference lemmas from Section
\ref{sec:poly}.  Well, $\V^\mathrm{lo}$ and $\V^\mathrm{up}$ are the
maximum and minimum of the quantities 
\begin{equation}
\label{eq:vquant}
\big(U_j - (\Gamma y)_j\big) \cdot \frac{\|v\|_2^2}{(\Gamma v)_j}  
+ \lambda_k,  
\end{equation}
over $j$ such that $(\Gamma v)_j > 0$ and $(\Gamma v)_j < 0$, 
respectively.  In the above, we have used the fact that $v^T y =
\lambda_k$, per its definition in \eqref{eq:lar_v}.

Now, since $v$ proportional to \smash{$P_{A_{k-1}}^\perp X_{j_k}$},
the first $k-1$ rows of $\Gamma$ are contained in
\smash{$\col(X_{A_{k-1}})$}, so   
$(\Gamma v)_j = 0$ for $j=1,\ldots k-2$.  For the $(k-1)$st row, we
have $(\Gamma v)_{k-1} = -\|v\|_2^2$, and for the $k$th and $(k+1)$st
row, we have $(\Gamma v)_k = (\Gamma v)_{k+1} = \|v\|_2^2$.  The
quantity \smash{$\V^\mathrm{up}$}, therefore, is singularly defined in 
terms of the $(k-1)$st row.  
As $U_{k-1}=0$ and $(\Gamma y)_{k-1} = \lambda_{k-1}-\lambda_k$, 
we can see from \eqref{eq:vquant} that
\smash{$\V^\mathrm{up}=\lambda_{k-1}$}.  The quantity
\smash{$\V^\mathrm{lo}$}, meanwhile, is defined in by the $k$th and
$(k+1)$st rows.  As $U_k=0$, $U_{k+1}=M_k^+$, and 
$(\Gamma y)_k = (\Gamma y)_{k+1} = \lambda_k$, we can see from
\eqref{eq:vquant} that \smash{$\V^\mathrm{lo}=M_k^+$}. 

Lastly, the result $\|v\|_2=1/\weight_k$ can be seen by direct
calculation, and is given in Lemma 10 of \citet{LTTT2013}.

\subsection{Proof of Theorem \ref{thm:spacing_conservative}} 
\label{app:spacing_conservative}


By Lemma \ref{lem:ord} of Appendix \ref{app:refined} (which recites  
Lemma 9 of \citet{LTTT2013}) we know that $\lambda_{k+1} \geq
M^+_k$. As the truncated Gaussian survival function is monotone  
increasing in its lower truncation limit, we have the relationship 
\smash{$\tilde{T}_k \geq T_k$}, for the modified and
original spacing statistics at step $k$.
The fact that the latter is uniform 
under the null implies that the former is super-uniform 
under the null, establishing the result.

\subsection{Plots for max-$|t|$-test comparison} 
\label{app:maxt}

Figure \ref{fig:maxt} displays the p-values from the TG test and the
max-$|t|$-test over 6 steps of FS, and 1000 repetitions (draws of $y$
from an entirely null model, with mean zero).  Both tests look good at
step 1, but the max-$|t|$-test becomes more and more conservative for
later steps. The reason is that TG test for FS conditions on all
selection events up to and including that at step $k$---the same is
true for the TG test for LAR, and its p-values would look exactly the
same here.  The max-$|t|$-test, however, does not do this. To be more
explicit, the max-$|t|$-test at step 2 ignores the fact  
that the observed $t_{\max}(y)$ is the {\it second largest} value of
the statistic in the data,  and erroneously compares it to a reference
distribution of {\it largest} $t_{\max}(\epsilon)$ values over a
smaller set.  This creates a conservative bias in the p-value.
Remarkably, the exact TG test for FS (and LAR) is able carry out full
conditioning on past selection events exactly. 

\begin{figure}[htb]
\centering
\includegraphics[width=\textwidth]{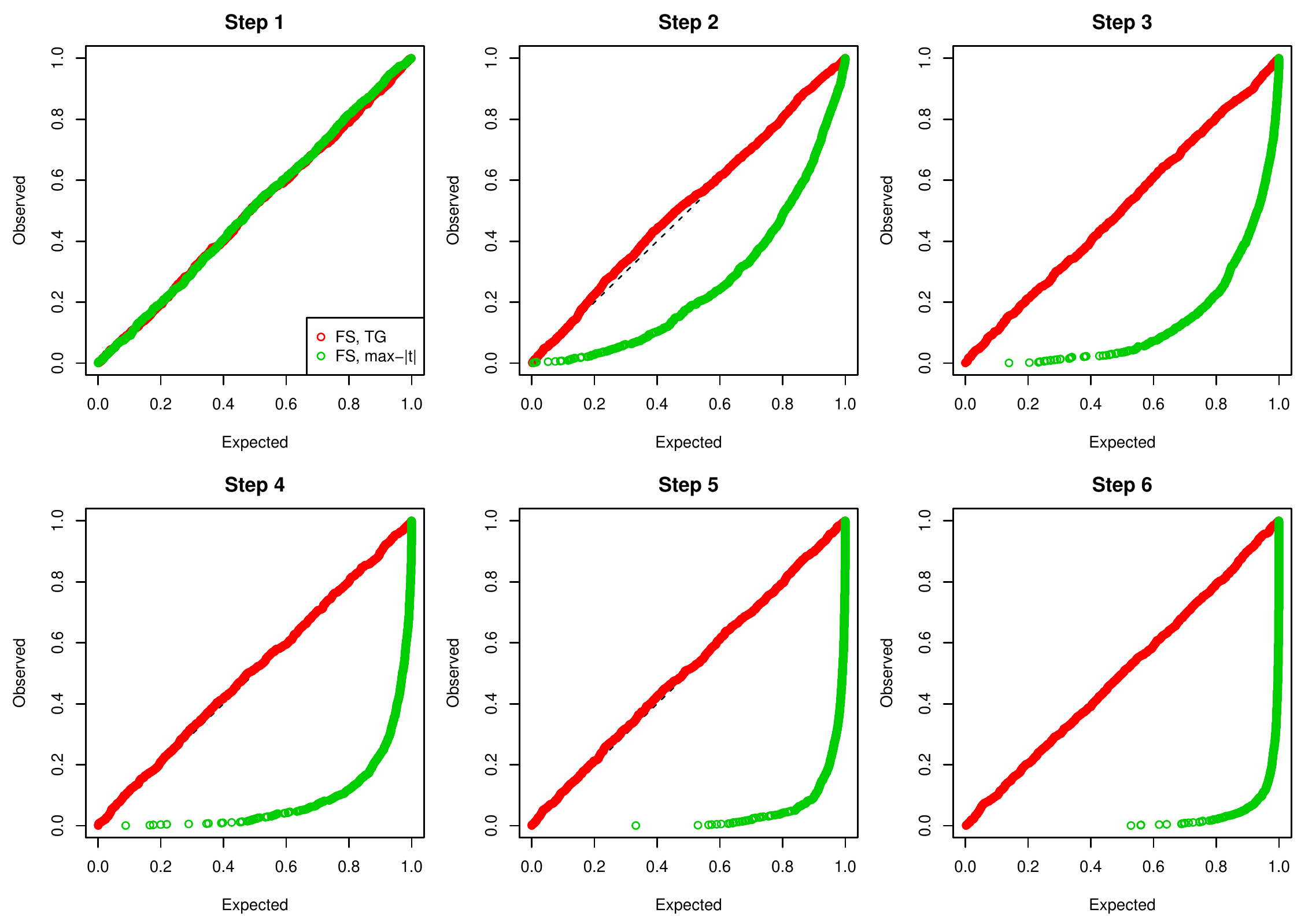} 
\caption{\it Using the same setup as in Figure \ref{fig:pvalues}, 
  but with the mean being $\theta=X\beta^*=0$, shown are p-values from 
  the first 6 steps of FS, using the TG test and the max-$|t|$-test,
  across 1000 repetitions (draws of $y$ from the simulation model).}   
\label{fig:maxt}
\end{figure}

\subsection{Proof of Theorem \ref{thm:spacing_equiv}}
\label{app:spacing_equiv}

We begin with a helpful lemma based on Mills' inequalities.

\begin{lemma}
\label{lem:helper}
Suppose that $x,y \rightarrow \infty$ and $y(y-x) \rightarrow   
\infty$. Then, for $\Phi$ the standard normal CDF and $\phi$ the
standard normal density,  
\begin{equation*}
\big(\Phi(y)-\Phi(x)\big) \cdot \frac{x}{\phi(x)} \rightarrow 1.
\end{equation*}
\end{lemma}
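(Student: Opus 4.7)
The plan is to prove this via a direct application of Mills' inequalities, tracking leading terms carefully. First I would rewrite $\Phi(y)-\Phi(x) = (1-\Phi(x)) - (1-\Phi(y))$. The two-sided Mills bound,
\[
\frac{\phi(t)}{t}\!\left(1-\frac{1}{t^2}\right) \le 1-\Phi(t) \le \frac{\phi(t)}{t} \qquad (t > 0),
\]
says exactly that $1-\Phi(t) = (\phi(t)/t)(1+O(1/t^2))$ as $t\to\infty$. Applying this at both $x$ and $y$, which are each tending to $\infty$ by hypothesis, I get
\[
\Phi(y)-\Phi(x) = \frac{\phi(x)}{x}\bigl(1+o(1)\bigr) - \frac{\phi(y)}{y}\bigl(1+o(1)\bigr).
\]

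Next, I would multiply through by $x/\phi(x)$, reducing the claim to showing that the ratio $x\phi(y)/(y\phi(x))$ tends to zero. Note that $y(y-x)\to\infty$ with $y\to\infty$ forces $y-x > 0$ eventually (otherwise the product is nonpositive), so in particular $y > x$ in the limit. Then
\[
\frac{x\,\phi(y)}{y\,\phi(x)} \;=\; \frac{x}{y}\,\exp\!\left(-\frac{y^2-x^2}{2}\right) \;\le\; \exp\!\left(-\frac{(y-x)(y+x)}{2}\right) \;\le\; \exp\!\left(-\frac{y(y-x)}{2}\right),
\]
and the hypothesis $y(y-x)\to\infty$ drives this to $0$. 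Combining with the previous display gives $(\Phi(y)-\Phi(x))\cdot x/\phi(x) = 1 - o(1) - 0\cdot(1+o(1)) \to 1$.

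The only place any real care is needed is in coordinating the two expansions: one must check that the $O(1/x^2)$ and $O(1/y^2)$ error terms from Mills' inequality, when rescaled by $x/\phi(x)$, remain $o(1)$. For the first piece this is immediate since $x\to\infty$, and for the second piece the error sits in front of a quantity already shown to vanish, so it is harmless. Thus the argument is essentially a two-line estimate once the Mills expansion is in hand; I do not anticipate a genuine obstacle. The key conceptual point, which the lemma is engineered to exploit in the proof of Theorem~\ref{thm:spacing_equiv}, is that $y(y-x)\to\infty$ is precisely the condition that makes the upper tail mass past $y$ negligible compared with the mass in the strip $[x,y]$.
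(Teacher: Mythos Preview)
Your proposal is correct and follows essentially the same approach as the paper: decompose $\Phi(y)-\Phi(x)$ into survival functions, apply Mills' inequalities to each, multiply by $x/\phi(x)$, and then use $y(y-x)\to\infty$ to kill the $\phi(y)/\phi(x)$ term via $\exp(-(y+x)(y-x)/2)\le\exp(-y(y-x)/2)$. The only cosmetic difference is that the paper keeps the explicit two-sided Mills bounds throughout rather than passing to $O(1/t^2)$ asymptotic notation, but the content is identical.
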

\begin{proof}
Write
\begin{equation*}
\Phi(y)-\Phi(x) = 1-\Phi(x) - \big(1-\Phi(y)\big),
\end{equation*}
and apply Mills' inequality to each of the survival terms
individually, yielding
\begin{equation*}
\frac{\phi(x)}{x} \frac{1}{1+1/x^2} - \frac{\phi(y)}{y} \leq
\Phi(y) - \Phi(x) \leq 
\frac{\phi(x)}{x} - \frac{\phi(y)}{y} \frac{1}{1+1/y^2}.
\end{equation*}
Multiplying through by $x/\phi(x)$, and noting that
\begin{equation*}
\frac{\phi(y)}{\phi(x)} = \exp(-(y+x)(y-x)/2)
\leq \exp(-y(y-x)/2) \rightarrow 0,
\end{equation*} 
we have established the result.
\end{proof}

Now consider
\begin{equation*}
\tilde{T}_k =
\frac{\Phi(\lambda_{k-1} \frac{\weight_k}{\sigma}) -
\Phi(\lambda_k \frac{\weight_k}{\sigma})}
{\Phi(\lambda_{k-1} \frac{\weight_k}{\sigma}) -
\Phi(\lambda_{k+1} \frac{\weight_k}{\sigma})}.
\end{equation*}
We assume 
\smash{$\weight_k \lambda_{k+1} \cp \infty$} (which implies 
the same of $\weight_k\lambda_k$, $\weight_k\lambda_{k-1}$), 
and \smash{$\weight_k^2 \lambda_{k-1}(\lambda_{k-1} - \lambda_k)     
\cp \infty$} (which implies the same of 
$\weight_k^2 \lambda_{k+1}(\lambda_{k-1} - \lambda_{k+1})$), hence
we can use Lemma \ref{lem:helper} to write 
\begin{equation*}
\tilde{T}_k = (1+o_P(1)) \cdot 
\frac{\phi(\lambda_k \frac{\weight_k}{\sigma})}
{\phi(\lambda_{k+1} \frac{\weight_k}{\sigma})} \cdot
\frac{\lambda_{k+1}}{\lambda_k},
\end{equation*}
where $o_P(1)$ denotes a term converging to zero in probability.  
Thus 
\begin{align*}
-\log(\tilde{T}_k) &= \frac{\weight_k^2}{\sigma^2} 
\frac{\lambda_k^2 - \lambda_{k+1}^2}{2} - 
\log \Big(\frac{\lambda_{k+1}}{\lambda_k}\Big) + o_P(1) \\ 
&= \frac{\weight_k^2}{\sigma^2}
\frac{\lambda_k(\lambda_k-\lambda_{k+1})}{2}  
+ \frac{\weight_k^2}{\sigma^2}
\frac{\lambda_{k+1}(\lambda_k-\lambda_{k+1})}{2}   
- \log \Big(\frac{\lambda_{k+1}}{\lambda_k}\Big) + o_P(1).
\end{align*}
Using the assumption that $\lambda_k/\lambda_{k+1} \rightarrow 1$ in
probability, this becomes
\begin{equation*}
-\log(\tilde{T}_k) = 
\frac{\weight_k^2}{\sigma^2} 
\lambda_k(\lambda_k-\lambda_{k+1}) 
+ o_P(1),
\end{equation*}
as desired.

\bibliographystyle{agsm}
\bibliography{ryantibs}

\end{document}